\newtheorem{Thm}{Theorem}[section]
\newtheorem{Lem}[Thm]{Lemma}
\newtheorem{Prop}[Thm]{Proposition}
\newtheorem*{Cara}{Carath\'eodory's Theorem} 
\newtheorem*{Thm2.1.3}{{Theorem 2.1.3 of \cite{CoddingtonTheory}}}
\newtheorem*{ThmI.5.3}{{Theorem I.5.3 of \cite{Hale2009Ordinary}}}
\theoremstyle{definition}
\newtheorem{Def}[Thm]{Definition}
\theoremstyle{remark}
\newcommand{\man}[1]{\ensuremath{\mathcal{#1}}} %manifold
\newcommand{\ABound}[1][M]{\ensuremath{\mathcal{B}(\mathcal{#1})}} %Shorthand for the abstract boundary symbols
\newcommand{\bound}[2][]{\ensuremath{\partial{#1}(\man{#2})}} %representation of the topological boundary of a manifold in a specified embedding.
\newcommand{\covers}{\ensuremath{\rhd}} %expresses that one set covers another to the right...
\newcommand{\EmptySet}{\ensuremath{\varnothing}}
\begin{document}

  \title{Generalizations of the {A}bstract {B}oundary singularity theorem}
  \author{Ben E Whale$^{124}$, Mike J S L Ashley$^{235}$, and Susan M Scott$^{26}$}

  \footnotetext[1]{Department of Mathematics and Statistics,
  Division of Sciences, University of Otago} 

  \footnotetext[2]{Centre for Gravitational Physics, 
    Department of Quantum Science, 
  College of Physical and Mathematical Sciences, 
  The Australian National University} 

  \footnotetext[3]{Financial Risk Management, KPMG Advisory Australia}

  \footnotetext[4]{ben.whale@otago.ac.nz}
  \footnotetext[5]{mashley1@kpmg.com.au}
  \footnotetext[6]{susan.scott@anu.edu.au}

  \maketitle

  \begin{abstract}
    The Abstract Boundary singularity theorem was first proven by Ashley and
    Scott. It links the existence of incomplete causal geodesics 
    in strongly causal, maximally extended spacetimes to the
    existence of Abstract Boundary
    essential singularities, i.e., non-removable singular
    boundary points. We give two 
    generalizations of this theorem: the first to continuous causal 
    curves and the
    distinguishing condition, the second to locally Lipschitz curves in
    manifolds such that no inextendible locally Lipschitz curve is
    totally imprisoned. To do this we extend generalized affine parameters
    from $C^1$ curves to locally Lipschitz curves.
  \end{abstract}

\section{Introduction}

  The end goal of our program of research is to link the 
  Penrose Hawking Singularity Theorems to
  curvature singularity results.
  Our three theorems, Theorems \ref{thm.abst_original},
  \ref{CurLim:Thm.NewAbstractBoundarySingularitTheorem} 
  and \ref{CurLim:Thm.NewAbstractBoundarySingularitTheoremLL},
  are a further step towards this goal. They prove that the Penrose
  Hawking Singularity Theorems actually 
  imply the existence of irremovable, also
  called essential, singularities and they provide a location for
  these singularities in terms of boundary points of an envelopment.
  It is our hope that this additional structure can be exploited to
  complete our program of research.

  The Abstract Boundary singularity theorem, proven by Ashley and Scott
  \cite[Theorem 4.12]{Ashley2002a}, is:
  \begin{Thm}
    \label{thm.abst_original}
    Let $(\man{M},g)$ be a strongly causal, $C^l$ maximally
    extended, $C^k$ spacetime $(1\leq l\leq k)$. 
    Let $\mathcal{C}$ be the 
    set of affinely parametrized causal geodesics in $\man{M}$.
    There
    exists an incomplete curve in $\mathcal{C}$ 
    if and only if the Abstract Boundary $\ABound$
    contains an abstract $C^l$ essential singularity.
  \end{Thm}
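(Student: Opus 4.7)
The plan is to establish a bidirectional correspondence between incomplete causal geodesics and essential singularities by producing, on each side, a curve-based witness inside a chosen envelopment, and then using the two strong hypotheses (strong causality and $C^l$ maximal extension) to upgrade that witness into the required a-boundary classification of the limiting boundary point.

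First I would unpack the a-boundary definitions relevant here: a boundary point $p$ of an envelopment $\phi\colon \man{M}\hookrightarrow\hat{\man{M}}$ is a $C^l$ essential singularity precisely when $p$ is (i) a $C^l$ singular boundary point, (ii) approached by an affinely parametrized causal geodesic with bounded affine parameter (i.e., a geodesic-approachable point), and (iii) non-removable under the a-boundary equivalence on envelopments. With this unpacking the theorem reduces to three sub-equivalences, and I would verify each.

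For the forward direction, take an incomplete affinely parametrized causal geodesic $\gamma\colon[0,a)\to\man{M}$ and fix an arbitrary $C^l$ envelopment $\phi$. Strong causality forbids total imprisonment of the causal curve $\gamma$ in any compact subset of $\man{M}$, so $\phi\circ\gamma$ must eventually leave every compact subset of $\phi(\man{M})$ as the parameter tends to $a$. Any limit point in $\hat{\man{M}}$ therefore lies on $\partial\phi(\man{M})$, because an interior limit would contradict inextendibility of $\gamma$ in $\man{M}$. This produces an approached boundary point $p$ with bounded affine parameter, hence geodesic-approachable. The point $p$ must be $C^l$ singular: a regular $p$ would supply a $C^l$ extension of $\man{M}$ through a neighbourhood of $p$, contradicting $C^l$ maximal extension. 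Finally $p$ must be non-removable: in any covering envelopment the image of $\gamma$ re-presents the same incomplete geodesic, and maximal extension again forces its limit to lie on the boundary rather than in the embedded manifold. Thus $p$ descends to an abstract $C^l$ essential singularity in $\ABound$. The reverse direction is almost a tautology under the unpacking above: a $C^l$ essential singularity is witnessed by an affinely parametrized causal geodesic in $\man{M}$ that approaches the boundary at bounded parameter, and such a geodesic is incomplete in $\man{M}$.

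The main obstacle, in my view, is the non-removability step in the forward direction. One must rule out the possibility that the singular limit produced in one envelopment is an artefact of that particular embedding, absorbable into the interior of an equivalent envelopment. Handling this requires careful bookkeeping with the a-boundary equivalence (covering of boundary sets by envelopments) combined with the rigidity supplied by $C^l$ maximal extension, so that incompleteness of $\gamma$ persists across the whole equivalence class rather than being an artefact of a single envelopment.
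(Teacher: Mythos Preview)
Your overall strategy aligns with the paper's, and the $\Leftarrow$ direction is essentially correct. The forward direction, however, has two genuine gaps.

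First, you fix an \emph{arbitrary} envelopment $\phi$ and assert that $\phi\circ\gamma$ accumulates at some boundary point. Nothing guarantees this: $\hat{\man{M}}$ need not be compact, so $\phi\circ\gamma$ may have no limit points whatsoever in $\hat{\man{M}}$. The paper does not choose the envelopment in advance; instead, once it has produced a full sequence in $\gamma$ with no limit points in $\man{M}$, it invokes the end-point Theorem (Theorem~3.2.1 of \cite{Whale2010}) to \emph{construct} a specific envelopment in which that sequence converges to a boundary point. This construction is the missing ingredient in your sketch.

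Second, your argument that $\gamma$ has no interior limit points conflates two different statements. From ``strong causality forbids total imprisonment'' you infer that $\phi\circ\gamma$ eventually leaves every compact set, and separately that ``an interior limit would contradict inextendibility''. Neither step is valid as written: a curve that is not totally imprisoned can still return to a fixed compact set infinitely often, and an inextendible curve can perfectly well have interior limit points that are not endpoints (a winding curve). The paper handles this by an explicit trichotomy on full sequences: if all of them share a limit (an endpoint), a convex normal neighbourhood argument produces a geodesic extension, contradicting incompleteness; if two of them have distinct limits, one writes down directly a violation of strong causality. You could compress these into a single step by citing the stronger no-\emph{partial}-imprisonment result for strongly causal spacetimes (e.g.\ \cite[Proposition~6.4.7]{HawkingEllis1973}), but then that is what must be invoked, not mere failure of total imprisonment.

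Finally, the non-removability step you flag as the main obstacle is in fact the shortest part of the paper's argument: once the boundary point $x$ is in hand, Theorem~43 of \cite{ScottSzekeres1994} says that a removable singularity is covered by a boundary set containing a $C^l$ regular point, and $C^l$ maximal extension rules that out in one line. Your proposed route (tracking $\gamma$ through every covering envelopment) is more laborious and, as stated, does not obviously match the a-boundary definition of removability.
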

  An abstract $C^l$ essential singularity is an abstract boundary
  set which has a singleton $\{p\}$ as a representative boundary
  set where $p$ is
  a singular
  boundary point which cannot be
  removed by a change of coordinates and is approached by a curve in 
  $\mathcal{C}$  with bounded parameter.
  This theorem does not prove the existence of an incomplete
  causal
  geodesic, but rather shows that the existence of
  an incomplete causal geodesic is equivalent to the existence of an
  endpoint for the incomplete geodesic: that is, a location
  for the singularity in the Abstract Boundary.   
  Hence, the theorem extends the `standard' singularity theorems, e.g.,
  the Penrose and Hawking singularity theorems 
  \cite[Section 8.2]{HawkingEllis1973}, by showing that
  they actually produce genuine singularities,
  at least according to the Abstract Boundary classification of boundary
  points \cite[Theorem 4.13]{Ashley2002a}.
  For further details about the Abstract Boundary
  please refer to one of 
  \cite{Ashley2002a,ScottSzekeres1994,Whale2010,Whale2014Chart}.

  Ideally, the use of geodesics and the 
  assumption of strong causality could be relaxed to increase the 
  generality of the theorem. Geroch \cite{Geroch1968b} has shown that
  in order to identify all singular behaviour in a spacetime it is
  necessary to consider, at least, all causal curves. Hence,
  it is desirable that 
  the singularity theorem above be generalized to include, at least,
  all 
  causal curves.
  
  The most general singularity theorems,
  like that given by Maeda and Ishibashi \cite{MaedaIshibashi1996}, use 
  causality conditions much weaker than strong causality. Ashley
  and Scott
  have investigated weakening the strong causality condition 
  in some cases.  In particular, they have shown that
  the Abstract Boundary singularity theorem is false in chronological 
  spacetimes and have indicated how a counter example may be provided in
  causal spacetimes \cite[Section 4.4.1]{Ashley2002a}. In 
  \cite[Theorem 4.22]{Ashley2002a} Ashley 
  and Scott also prove that in 2-dimensional spacetimes the theorem holds
  for the distinguishing condition.
  From a theoretical point of view it is also pleasing to find the weakest
  conditions under which the theorem holds.

  The main theorem of this paper is the following, which we suggest
  should be considered as \emph{the} Abstract Boundary singularity
  theorem.

  \begin{Thm}[The Abstract Boundary Singularity Theorem]
    \label{CurLim:Thm.NewAbstractBoundarySingularitTheorem}
    Let $(\man{M},g)$ be a future (past) distinguishing, $C^l$ maximally
    extended, $C^k$ spacetime $(1\leq l\leq k)$ and let $\mathcal{C}$
    be the family of generalized affinely parametrized continuous causal 
    curves in $\man{M}$.
    There
    exists an incomplete curve in $\mathcal{C}$ 
    if and only if $\ABound$
    contains an abstract $C^l$ essential singularity.
  \end{Thm}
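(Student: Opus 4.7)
The plan is to mirror the structure of the original theorem (Theorem \ref{thm.abst_original}) while upgrading two ingredients: causal geodesics are replaced by continuous causal curves (which necessitates extending generalized affine parameters from $C^1$ curves to locally Lipschitz curves), and strong causality is replaced by the future (or past) distinguishing condition. The reverse implication is essentially tautological: by definition, an abstract $C^l$ essential singularity is represented by a singleton $\{p\}$ approached by some $\gamma\in\mathcal{C}$ with bounded generalized affine parameter, and since $p\notin\man{M}$ this $\gamma$ is incomplete.

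For the forward direction, suppose $\gamma:[0,a)\to\man{M}$ lies in $\mathcal{C}$ with bounded generalized affine parameter and has no endpoint in $\man{M}$. First, fix any envelopment $\phi:\man{M}\hookrightarrow\hat{\man{M}}$. Using boundedness of the generalized affine parameter, I would show that $\phi\circ\gamma$ remains in a pre-compact region of $\hat{\man{M}}$ and therefore has an accumulation point $\hat{p}\in\partial\phi(\man{M})$. The singleton $\{\hat{p}\}$ is the candidate abstract $C^l$ essential singularity. That it is a boundary point is automatic (else $\gamma$ would have an endpoint in $\man{M}$); that it is $C^l$ singular then follows from $C^l$ maximal extension, since a regular boundary point would permit a $C^l$ extension of $\man{M}$ through $\hat{p}$.

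The main obstacle is proving that $\hat{p}$ is essential, i.e., that no alternative envelopment can demote $\hat{p}$ to a regular or removable boundary point. In the original theorem, strong causality was used to rule out total imprisonment of causal geodesics, thereby preventing alternative envelopments from identifying distant portions of the curve with $\hat{p}$. Under the weaker distinguishing condition, total imprisonment of continuous causal curves is no longer excluded in general, so a more delicate argument is required. My expectation is that one exploits the defining property of future (past) distinguishing, namely that distinct points have distinct chronological futures (pasts), combined with a limit-curve argument for continuous causal curves, to show that an envelopment rendering $\hat{p}$ regular would force two distinct points of $\man{M}$ to share a chronological future (past), contradicting the distinguishing hypothesis.

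A prerequisite technical step supporting all of the above is the extension of the generalized affine parameter from $C^1$ curves to locally Lipschitz curves, as foreshadowed by the abstract. One must verify that the bounded/unbounded dichotomy of the parameter is invariant under the auxiliary choices used in its definition, and that boundedness of the parameter continues to yield the compactness-style control on the image of $\phi\circ\gamma$ relied upon in the accumulation-point step above. With these ingredients in place the proof assembles along the lines sketched.
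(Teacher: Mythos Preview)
Your proposal has the architecture wrong in two places, and both are genuine gaps.

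First, the step ``using boundedness of the generalized affine parameter, show that $\phi\circ\gamma$ remains in a pre-compact region of $\hat{\man{M}}$'' does not work for an \emph{arbitrary} envelopment $\phi$. The generalized affine parameter is defined purely from the geometry of $(\man{M},g)$ and carries no information about how $\man{M}$ sits inside $\hat{\man{M}}$; one can always re-envelop so as to send the relevant end of $\gamma$ off to infinity. The paper does not attempt this. Instead it splits on the behaviour of $\gamma$ \emph{inside} $\man{M}$: either some full sequence along $\gamma$ has no limit point in $\man{M}$ (Case~1), or every full sequence has a limit point (Case~2, so $\gamma$ is precompact in $\man{M}$). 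In Case~1 the paper invokes the end-point Theorem to \emph{construct} a suitable envelopment in which that sequence converges to a boundary point $x$; once $x$ exists, $C^l$ maximal extension alone (via Theorem~43 of Scott--Szekeres) forces $x$ to be a $C^l$ essential singularity, with no use of the causality hypothesis at all.

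Second, you have located the role of the distinguishing condition incorrectly. You propose to use it in proving essentialness of $\hat p$; the paper uses it only to dispose of Case~2. In Case~2 either $\gamma$ has an endpoint in $\man{M}$, in which case Proposition~\ref{CurLim:Prop.WHasOneElementIFFGammaIsExtendibleAlt} extends $\gamma$ as a continuous causal curve, contradicting incompleteness (Case~2.1); or $\gamma$ is precompact and winding, hence inextendible and totally imprisoned in the compact set $\overline{\gamma}$, and then Hawking--Ellis Proposition~6.4.8 says the future (past) distinguishing condition fails on $\overline{\gamma}$, a contradiction (Case~2.2). Your sentence ``$\hat p$ is a boundary point, else $\gamma$ would have an endpoint in $\man{M}$'' overlooks exactly this winding possibility: an accumulation point in $\man{M}$ need not be an endpoint. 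The limit-curve/shared-chronological-future mechanism you sketch is neither needed nor used; the off-the-shelf imprisonment result does all the work.
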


  There are two main difficulties with proving this theorem.
  The first is that we need a definition of a
  generalized affine parameter on a continuous causal curve 
  (Definition \ref{continuousGAP}) as 
  generalized affine parameters are usually defined on
  $C^1$ curves \cite[Page 208]{BeemEhrlichEasley1996}.
  The second is that we need the existence
  of an endpoint for a continuous causal curve
  in order
  to imply that
  the generalized affine parameter is bounded
  and
  the curve is extendible (Proposition \ref{prop:llextend}).
  To tackle these difficulties we work with a more general class of curves, 
  locally Lipschitz curves, Definition \ref{def:locallylipschitz}
  (although some additional work is required
  to apply results about locally Lipschitz curves
  to continuous causal curves). In proving the required results for
  locally Lipschitz curves we get, in addition to
  Theorem \ref{CurLim:Thm.NewAbstractBoundarySingularitTheorem},
  the following theorem.

  \begin{Thm}\label{CurLim:Thm.NewAbstractBoundarySingularitTheoremLL}
    Let $(\man{M},g)$ be a $C^l$ maximally
    extended, $C^k$ spacetime $(1\leq l\leq k)$ so that no inextendible
    locally
    Lipschitz curve
    is totally imprisoned.
    Let $\mathcal{C}$
    be the family of generalized affinely parametrized locally Lipschitz 
    curves in $\man{M}$.
    There
    exists an incomplete curve in $\mathcal{C}$ 
    if and only if $\ABound$
    contains an abstract $C^l$ essential singularity.
  \end{Thm}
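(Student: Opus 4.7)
The plan is to mirror the structure of the original Ashley--Scott proof of Theorem \ref{thm.abst_original}, but with two substitutions dictated by the new hypotheses: replace affine parameters on causal geodesics by the generalized affine parameter of Definition \ref{continuousGAP} on locally Lipschitz curves, and replace the strong causality assumption by the hypothesis that no inextendible locally Lipschitz curve is totally imprisoned. The extendibility/endpoint correspondence for locally Lipschitz curves supplied by Proposition \ref{prop:llextend} will play the role that convergence of incomplete geodesics played in the original argument.

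For the ``if'' direction, suppose $\ABound$ contains an abstract $C^l$ essential singularity represented by a singleton $\{p\}$ in some envelopment $\phi\colon\man{M}\hookrightarrow\man{M}'$, with $p\in\bound{M}$. By the description of essential singularities recalled after Theorem \ref{thm.abst_original}, there is a curve $\gamma\in\mathcal{C}$ in $\man{M}$, with bounded generalized affine parameter, whose image $\phi\circ\gamma$ approaches $p$ in $\man{M}'$. Since $\man{M}'$ is Hausdorff and $p\notin\phi(\man{M})$, the curve $\gamma$ can have no endpoint in $\man{M}$: any such endpoint $q$ would force $\phi(q)=p$ by uniqueness of limits. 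Combined with the bounded parameter this exhibits $\gamma$ as an incomplete curve in $\mathcal{C}$.

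For the ``only if'' direction, let $\gamma\colon[0,a)\to\man{M}$ be an incomplete curve in $\mathcal{C}$. Its generalized affine parameter is bounded and, by Proposition \ref{prop:llextend}, it has no endpoint in $\man{M}$, so it is inextendible as a locally Lipschitz curve. The non-imprisonment hypothesis therefore applies and $\gamma$ must have a tail which eventually leaves every compact subset of $\man{M}$. Following the template of the Ashley--Scott proof, I would use this escape behaviour together with a chart-based envelopment construction to produce an envelopment $\phi\colon\man{M}\hookrightarrow\man{M}'$ in which $\phi\circ\gamma$ accumulates at some point $p\in\bound{M}$. Because the generalized affine parameter of $\gamma$ is bounded, the singleton $\{p\}$ is $C^l$ singular; because $\man{M}$ is $C^l$ maximally extended, any coordinate removal of $p$ would yield a strictly larger $C^l$ extension of $\man{M}$, contradicting maximality, so $p$ is non-removable. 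Hence $\{p\}$ represents an abstract $C^l$ essential singularity in $\ABound$.

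The principal obstacle is the envelopment construction in the ``only if'' direction. In the original proof, strong causality supplied not only the absence of imprisonment but also well-behaved convex normal neighbourhoods along causal geodesics from which an envelopment could be glued; here I have only a locally Lipschitz curve and the bare non-imprisonment hypothesis, so I must manufacture a genuine boundary accumulation point in some envelopment without causal-geometric control. A secondary technical difficulty is verifying that the generalized affine parameter of Definition \ref{continuousGAP} pushes forward correctly along $\phi$, so that boundedness in $\man{M}$ translates to the bounded-parameter approach to $p$ required by the definition of an essential singularity; this is precisely the step where the extension of the generalized affine parameter from $C^1$ to locally Lipschitz curves earns its keep.
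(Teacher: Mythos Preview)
Your outline matches the paper's proof in structure and substance; the two points you flag as ``obstacles'' are, however, not obstacles. The envelopment construction in Case~1 of the original Ashley--Scott argument uses only the end-point Theorem \cite[Theorem 3.2.1]{Whale2010}, which is a purely topological statement about sequences without limit points in a manifold and requires no causal hypothesis; strong causality in the original proof was invoked solely to rule out Case~2.2 (precompact winding curves), and that role is here played directly by the non-imprisonment hypothesis. Your secondary worry about the generalized affine parameter ``pushing forward along $\phi$'' is likewise unnecessary: the bounded parameter property and the classification of $p$ as a singular boundary point refer only to the parameter on $\gamma$ in $\man{M}$, not to any parameter in the enveloping manifold. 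Once the preparatory results (the bounded parameter property for $\mathcal{C}_{\text{ll}}(\man{M})$ and Proposition~\ref{prop:llextend}) are in hand, the proof is short: the $\Leftarrow$ direction and Case~1 are literally unchanged from Theorem~\ref{thm.abst_original}, Case~2.1 invokes Proposition~\ref{prop:llextend}, and Case~2.2 is dismissed in one line by the hypothesis, since a precompact curve without endpoint is inextendible and totally imprisoned in the compact set $\overline{\gamma}$.
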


  While Theorem \ref{CurLim:Thm.NewAbstractBoundarySingularitTheoremLL}
  is more general than Theorem 
  \ref{CurLim:Thm.NewAbstractBoundarySingularitTheorem} it lacks 
  physical motivation for the condition on $\man{M}$.

  Consider the three statements in Theorem
  \ref{CurLim:Thm.NewAbstractBoundarySingularitTheoremLL}:
  \begin{description}
    \item[(1)] The spacetime does not contain an inextendible totally imprisoned locally Lipschitz
    curve,
    \item[(2)] The set of curves, $\mathcal{C}$, contains an incomplete curve,
    \item[(3)] The Abstract Boundary contains an abstract $C^l$ essential
    singularity.
  \end{description}
  The following implications are apparent from the definitions and the
  proof of Theorem \ref{CurLim:Thm.NewAbstractBoundarySingularitTheoremLL}:
  \begin{enumerate}
    \item (1) implies that (2) and (3) are equivalent,
    \item (2) implies that either (1) does not hold or (3) holds,
    \item (3) implies that (2) holds and that this incomplete curve
      is inextendible and not totally imprisoned.
  \end{enumerate}
  Hence, given the use of locally Lipschitz curves,
  Theorem \ref{CurLim:Thm.NewAbstractBoundarySingularitTheoremLL} cannot
  be weakened further. 
  Theorem \ref{CurLim:Thm.NewAbstractBoundarySingularitTheoremLL}
  can, therefore, be considered a proof of the generally accepted statement
  that incompleteness of curves implies either the existence of incomplete
  trapped curves or of a
  singularity.

  More generally, this paper fits into the wider field of research
  into singularities and
  the use and application of boundary constructions in General 
  Relativity. We will not provide further context here, but we refer
  the interested reader to Senovilla's series of review articles on
  singularity theorems 
  \cite{Senovilla1998,2014arXiv1410.5226S,2006physics...5007S},
  recent work on the causal boundary \cite{Flores2010Final} and 
  Geroch's $g$-boundary \cite{Bentabol2014} as well as newer
  boundary constructions 
  \cite{Whale2014Chart,GarciaParrado:2005yz,citeulike:13467606}.
  With regards to the Abstract Boundary we refer the reader to
  \cite{Whale2014Chart,citeulike:9599226,citeulike:13467610} for
  discussions of topological properties,
  \cite{Whale2014Chart,citeulike:8625460} for its relationship
  to boundaries induced by charts and distances,
  \cite{FamaScott1994,FamaClarke1998} for homotopy and rigidity results,
  \cite{Ashley2002a,Ashley2002b,AshleyScott2003} for other Abstract Boundary
  singularity results and
  \cite{Ashley2002a,Whale2010} for general reviews of the Abstract Boundary.

\section{Common definitions}

  Throughout this paper we restrict our attention to $n$-dimensional, 
  paracompact,
  connected, Hausdorff, $C^\infty$ manifolds $\man{M}$ which are Lorentzian
  and time orientable.
  Standard definitions, e.g., of strong causality, the maximal extension of a
  manifold and of the generalized affine parameter, are taken from 
  \cite{BeemEhrlichEasley1996}. We do, however, deviate from
  \cite{BeemEhrlichEasley1996} by using the
   `total imprisonment' of \cite{HawkingEllis1973}
  rather than the 
  `imprisonment' of \cite{BeemEhrlichEasley1996}. 
  The two definitions are equivalent; we simply prefer the
  terminology of \cite{HawkingEllis1973}.
  We do not review the Abstract Boundary, though we encourage the reader
  to refer to one of 
  \cite{Ashley2002a,ScottSzekeres1994,Whale2010}.
  Because curves and their extensions play a central role in this paper, we
  remind the reader of the following definitions.

  \begin{Def}
    A $C^k$, $k\geq 1$, 
    (regular) curve 
    $\gamma:[a,b)\to \man{M}$, $a,b\in\mathbb{R}\cup\{\infty\}$, $a<b$, 
    is a $C^k$ function
    with
    $\gamma'$ everywhere
    non-zero. We say that $\gamma$ is
    non-spacelike (timelike) and future (past) directed if
    $\gamma'$ is everywhere non-spacelike (timelike) and 
    future (past) directed. 
  \end{Def}

  We shall need the following non-standard definition.

  \begin{Def}\label{def:locallylipschitz}
    A (regular) 
    locally Lipschitz curve $\gamma:[a,b)\to\man{M}$ is a function
    so that, for each chart $\phi:U\subset \man{M}\to\mathbb{R}^n$
    and each $t\in[a,b)$ such that $\gamma(t)\in U$, there exists
    $V$, a neighbourhood of $t$ in $[a,b)$, and
    $K\in\mathbb{R}^+$ such that $\gamma(V)\subset U$ and for all $t_1,t_2\in V$,
    \[
      d(\phi\circ\gamma(t_1),\phi\circ\gamma(t_2))\leq K\lvert t_1 - t_2\rvert,
    \]
    where $d$ is the Euclidean distance on $\mathbb{R}^n$ and so that
    $\gamma'$ is non-zero apart from a set of measure zero.
    Note that $K$ depends on the chart $\phi$ and the point $t$.
  \end{Def}

    The definition is independent of the choice of chart since 
    the set $V$ can be taken to be compact and
    changes of coordinates between charts are invertible and bounded on 
    compact sets contained in the intersection of their domains.

  \begin{Def}[{\cite[Section 3.2]{BeemEhrlichEasley1996}}
  or {\cite[Page 184]{HawkingEllis1973}}]
    \label{def.ccurve}
    A continuous future directed, non-spacelike curve 
    $\gamma:[a,b)\to\man{M}$ is a 
    continuous function so that for each $t_0\in [a,b)$ there is a
    neighbourhood $N$ of $t_0$ in $[a,b)$
    and a convex normal neighbourhood $U$ of $\gamma(t_0)$ so that
    for all $t\in N$, $t\neq t_0$, if $t>t_0$ then 
    $\gamma(t)\in J^+(\gamma(t_0),U)- \gamma(t_0)$ or
    if $t<t_0$ then $\gamma(t)\in J^-(\gamma(t_0),U)-\gamma(t_0)$. 
    We say that $\gamma$ is past directed if,
    for $t>t_0$, then $\gamma(t)\in J^-(\gamma(t_0),U)- \gamma(t_0)$ and
    for $t<t_0$ then $\gamma(t)\in J^+(\gamma(t_0),U)-\gamma(t_0)$.
    We say that $\gamma$ is timelike
    if the sets $I^+(\gamma(t_0),U)$ 
    and $I^-(\gamma(t_0),U)$ are used 
    instead of $J^+(\gamma(t_0),U)- \gamma(t_0)$ 
    and $J^-(\gamma(t_0),U)- \gamma(t_0)$ respectively.
    We shall assume that every continuous causal (timelike) curve
    is equipped with a parametrization so that it is also a locally
    Lipschitz curve.
    That such a parametrization always exists is proven
    on page 75 of \cite{BeemEhrlichEasley1996}. 
  \end{Def}

  It is worth noting that the idea that every continuous causal
  curve is also locally Lipschitz makes up part of the folklore of
  General Relativity. To the best of
  the authors' knowledge the first mention of this result, where it is stated
  without proof, is by Penrose \cite[Remark 2.26]{Penrose1972} who credits
  it to Geroch but gives no reference. While it is certainly true that
  every continuous causal curve has a reparametrization so that it is 
  locally Lipschitz, it is possible to give continuous causal curves
  parameters so that they are not locally Lipschitz (e.g. $t\mapsto t^{1/3}$).

  The condition that $\gamma(t)\neq\gamma(t_0)$, for all
  $t\in N - \{t_0\}$, is the continuous causal curve
  analogue of the non-zero tangent
  vector condition for $C^k$, $k>0$, curves.

  \begin{Def}[{\cite[Definition 2]{ScottSzekeres1994}}]
    A curve $\gamma:[a,b)\to\man{M}$
    is a subcurve of a curve 
    $\lambda:[a',b')\to\man{M}$,
    if $a'\leq a<b\leq b'$ and $\lambda|_{[a,b)}=\gamma$. 
    If $a=a'$ and $b<b'$ then $\lambda$ is an extension
    of $\gamma$.
  \end{Def}

  \begin{Def}[Compare to {\cite[Definition 6.2]{BeemEhrlichEasley1996}}]
    \label{Def:incomplete_geo}
    An affinely parametrized causal geodesic $\gamma:[a,b)\to \man{M}$ is
    incomplete if $b<\infty$ and $\gamma$ is not extendible by any
    affinely parametrized
    causal geodesic.
  \end{Def}

  \begin{Def}
    Two $C^k$, $k\geq 0$, 
    (locally Lipschitz) curves $\gamma:[a,b)\to \man{M}$ and
    $\lambda:[a',b')\to\man{M}$ are related by a change of parameter
    if there exists a $C^k$ (locally Lipschitz) surjective 
    strictly monotonically
    increasing function
    $f:[a',b')\to[a,b)$ so that $\lambda = \gamma\circ f$.
  \end{Def}

  Note that, unlike
  $C^k$ changes of parameter, the class of locally Lipschitz
  curves is not closed under locally Lipschitz reparametrizations.
  The class of locally Lipschitz curves is
  closed, however, under locally bi-Lipschitz reparametrizations,
  though we do not make that restriction here.

  The next definition is non-standard, however it 
  simplifies the discussion of many of the results in this paper.

  \begin{Def}
    Let $\gamma:[a,b)\to\man{M}$ 
    be a $C^k$, $k\geq 0$, curve. A full sequence in $\gamma$
    is a sequence $\{x_i=\gamma(t_i)\}_{i\in\mathbb{N}}$, 
    $\{t_i\}\subset [a,b)$, 
    $t_i<t_{i+1}$, so that $t_i\to b$
    as $i\to \infty$.
    We say that $x\in\man{M}$ is a limit point of $\gamma$ if there exists
    a full sequence $\{x_i\}$ in $\gamma$ so that $x_i\to x$. 
    We write $\gamma\to x$
    if and only if every full sequence in $\gamma$ converges to $x$, 
    in which case 
    $x$ is the endpoint of $\gamma$. 
    We say that a curve $\gamma$ is a winding curve
    if there exist two full sequences in $\gamma$ with different limit points.
  \end{Def}

  This terminology
  is inspired by the Misner spacetime \cite[Section 5.8]{HawkingEllis1973}
  in which every curve with at least two limit
  points `winds' around the cylinder.

  \begin{Def}
    A curve, $\gamma$, is precompact if the closure of its image,
    $\overline{\gamma}$, is compact.
  \end{Def}

\section{The Abstract Boundary singularity theorem}

  Since details of the Abstract Boundary singularity theorem
  have only appeared in Ashley's PhD Thesis
  \cite[Theorem 4.12]{Ashley2002a} we
  present here Ashley and Scott's proof of Theorem \ref{thm.abst_original}.
  We do not review the end-point Theorem
  \cite[Theorem 3.2.1]{Whale2010} which plays an important
  part in the proof below (Ashley's Thesis
  \cite[Section 4.2]{Ashley2002a} contains a nice discussion of the
  end-point Theorem but does not give its proof).
  Note that the $\Leftarrow$ implication of this
  proof follows
  simply by definition and the Hausdorff property of manifolds.
  It does not require a restriction on the causality of the spacetime
  $(\man{M},g)$.

  \begin{proof}[Proof of Theorem {\ref{thm.abst_original}}]
    $\Leftarrow$
	    Let $[p]\in\ABound[\man{M}]$ be an abstract $C^l$ essential singularity.
	    That is, there exists $\mu_p:\man{M}\to\man{M}_p$ an
	    envelopment so that $p\in\partial\mu_p(\man M)$ is a $C^l$
	    essential singularity.  Thus there exists
	    $\gamma:[a,b)\to \man{M}\in\mathcal{C}$, $b<\infty$, so that
	    $p$ is a limit point of $\mu_p(\gamma)$.
	    Let $\{x_i=\gamma(t_i)\}_{i\in\mathbb{N}}$ 
      be a full sequence in
	    $\gamma$ so that $\mu_p (x_i)$ converges to $p$.
	
	    Suppose that
	    there exists $\lambda:[a,c)\to\man{M}$ an extension
	    of $\gamma$, where $\lambda\in\mathcal{C}$. 
      Consider the sequence $\{y_i=\lambda(t_i)\}$. 
	    Since $\lambda$ is an extension of $\gamma$, $x_i=y_i$ so
	    that $\{\mu_p (y_i)\}$ converges to $p$. Yet, $t_i\to b$ and
	    $b<c$ so $y_i\to\lambda(b)$. Since $\man{M}_p$ is
	    Hausdorff and $\mu_p$ is continuous
      we see that $\mu_p(\lambda(b))=p$ which is a contradiction
	    since $\mu_p(\lambda(b))\in\mu_p(\man{M})$ 
      and $p\in\bound[\mu_p]{\man{M}}$. Therefore $\gamma$ is
	    an incomplete curve in $\mathcal{C}$ as required.
    
    $\Rightarrow$ Let
	    $\gamma\in\mathcal{C}$ be incomplete.
	     We
	    have two cases:

    \begin{description}
      \item[\textbf{Case 1}]
        Suppose that there exists
        a full sequence $\{x_i\}_{i\in\mathbb{N}}$ in 
        $\gamma$ so that $\{x_i\}$ has 
        no limit points in $\man{M}$.
      	
      	By the end-point Theorem \cite[Theorem 3.2.1]{Whale2010} 
        there exists an
        envelopment $\phi:\man{M}\to\man{M}_\phi$ so that
        $\{\phi(x_i)\}$ converges to $x\in\bound[\phi]{\man{M}}$. We will
        now proceed to classify $x$ according to the classification of 
        boundary points given in \cite{ScottSzekeres1994}.

        Since
        $(\man{M}, g)$ is $C^l$ maximally extended, $x$ cannot be a
        $C^l$ regular boundary point. By assumption, $x$ is
        approached by $\phi(\gamma)$ ($\gamma$ is 
        incomplete). Hence we can
        conclude that
        $x$ is a $C^l$ singularity. Suppose that $x$ is a $C^l$ removable
        singularity, then by Theorem 43 of \cite{ScottSzekeres1994} 
        there exists a boundary
        set $B$, of another envelopment, so that $B\covers x$ and $B$ 
        contains at least one
        $C^l$ regular boundary point. This contradicts that $(\man{M}, g)$ is
        $C^l$ maximally extended.
        Thus $x$ is a $C^l$ essential singularity. This implies
        that $[x]\in\ABound[\man{M}]$ is an abstract $C^l$ essential
        singularity.
      	
      \item[\textbf{Case 2}] 
        Suppose that every full sequence in $\gamma$ has a limit point
        in $\man{M}$.
        We distinguish two further cases.

        \begin{description}
          \item[\textbf{Case 2.1}]
            Suppose that every full sequence in $\gamma$ 
            has the same limit point. Thus there exists $p\in \man{M}$ 
            so that $\gamma\to p$. 
            
            Let $\gamma:[a,b)\to \man{M}$, where $b<\infty$, and
            let $N$ be a convex normal neighbourhood of $p$. Since $\gamma\to p$
            there exists $t\in[a,b)$ so that $\gamma|_{[t,b)}\subset N$.
            By the convexity of $N$ there exists $v\in T_p\man{M}$ so that
            $\gamma(t)=\exp_p(v)$. 
            Since $\gamma$ is a geodesic this implies that there exists 
            $\alpha\in\mathbb{R}^+$
            so that $\lim_{\tau\to b}\gamma'(\tau)=-\alpha v$.
            Let $\lambda:[0,c)\to\man{M}$ 
            be the unique affinely parametrized causal 
            geodesic from $p$ with
            tangent vector $-\alpha v$ at $p$, where $c$ is
            chosen so that $\lambda\subset N$. 
            The convexity of $N$ implies
            that the curve, $\mu$, 
            given by the concatenation of
            $\gamma$ and $\lambda$ is, itself, a causal geodesic. 
            Standard properties of affine parameters 
            (see the last paragraph of page 17 of
            \cite{BeemEhrlichEasley1996}) imply that
            we can affinely parametrize $\mu$ so that
            $\mu:[a,b+c)\to \man{M}$, 
            $\gamma=\mu|_{[a,b)}$ and $\lambda(t) = \mu(b+t)$.
            Since $[a,b)\subset[a,b+c)$, 
            the causal geodesic $\mu$ is an extension of $\gamma$.
            Hence $\gamma$ is not incomplete.
            
            This is a contradiction
            and thus this case cannot occur.
          \item[\textbf{Case 2.2}]
            Suppose that there exist two full sequences in $\gamma$
            with different limit points.

            Let $\gamma:[a,b)\to \man{M}$, where $b<\infty$, and let
            $\{x_i=\gamma(t_i^x)\}_{i\in\mathbb{N}}$ and
            $\{y_i=\gamma(t_i^y)\}_{i\in\mathbb{N}}$ 
            be the two full sequences
            with limit points $x,y\in \man{M}$, so that $x_i\to x$, 
            $y_i\to y$ and $x\neq y$. Without loss of generality,
            we can assume that for all $i$, $t_i^x<t_i^y<t_{i+1}^x$.

            Let $U,V\subset \man M$ be open sets so that $x\in U$,
            $y\in V$ and $U\cap V=\EmptySet$. Since $x_i\to x$ there exists
            $N_x\in\mathbb{N}$
            so that for all $i\geq N_x$, $x_i\in U$. Similarly, 
            as $y_i\to y$ there exists $N_y\in\mathbb{N}$
            so that for all $i\geq N_y$, $y_i\in V$.
            Let $i\geq\max\{N_x,N_y\}$, then $\gamma(t_i^x)\in U$,
            $\gamma(t_i^y)\in V$ and $\gamma(t_{i+1}^x)\in U$. Since
            $t_i^x<t_i^y<t_{i+1}^x$, $x\neq y$ and as $U$ and $V$ 
            are arbitrary
            the
            spacetime $(\man M, g)$ is not strongly causal.

            This is a
            contradiction and therefore this case cannot occur.
        \end{description}
    \end{description}

    As only case 1 may occur, we have proven our
    result.
  \end{proof}

  Note that the proof of the Abstract Boundary singularity theorem is based
  on the division of causal geodesics into three classes:
  \begin{enumerate}
    \item curves with a full sequence with no limit points, i.e.\
    non-precompact curves,
    \item curves with all full sequences having the same limit point, i.e.\
    precompact curves with an endpoint,
    \item curves with all full sequences having a limit point and with
    two full sequences with different limit points, i.e.\
    precompact curves without an endpoint (these curves are necessarily
    winding).
  \end{enumerate}
  The first class of curves correspond to Abstract Boundary essential
  singularities. The second class of curves correspond to curves that are
  not incomplete.
  The third class of curves correspond to curves that violate strong causality.
  Thus to generalize this theorem to all continuous causal curves and
  the distinguishing condition we need three
  things:
  \begin{enumerate}
    \item an incomplete continuous causal curve must correspond to an abstract
      essential singularity,
    \item a continuous causal curve with an endpoint must
      not be incomplete,
    \item the existence of a precompact winding continuous causal 
    curve must violate the distinguishing
    condition.
  \end{enumerate}

  The first point requires that each continuous causal curve carries a
  particular
  parametrization so that the set of all continuous causal curves satisfies the
  bounded parameter property. For more background on this
  we refer the reader to a discussion of the classification of the
  Abstract Boundary; see one of
  \cite{Ashley2002a,ScottSzekeres1994,Whale2010}.
  \begin{Def}[{\cite[Definition 4]{ScottSzekeres1994}}]
    \label{Def.bbp}
    Let $\mathcal{C}$ be a set of curves in $\man{M}$. The set $\mathcal{C}$ has
    the bounded parameter property if:
    \begin{enumerate}
      \item for all $p\in \man{M}$ there exists $\gamma\in \mathcal{C}$ so that
      $p\in \gamma$,
      \item if $\gamma\in \mathcal{C}$ then every subcurve of $\gamma$ is
      in $\mathcal{C}$,
      \item for all $\gamma,\lambda\in\mathcal{C}$ if there exists
      a change of parameter relating $\gamma$ and $\lambda$ then
      either both parameters are bounded or both are unbounded.
    \end{enumerate}
  \end{Def}
  The set of all affinely parametrized causal geodesics satisfies the bounded
  parameter property. We will show that locally Lipschitz curves carry a
  generalization of the generalized affine parameter defined on $C^1$ curves
  \cite[Page 208]{BeemEhrlichEasley1996} and
  that this parametrization ensures that the set of all locally Lipschitz
  curves satisfies the bounded parameter property. 
  Since continuous causal
  curves are locally Lipschitz 
  it will then be the case
  that the set of
  all continuous causal curves satisfies the bounded parameter property.

  The second point requires a definition of an incomplete continuous
  causal curve. We use the new parameter to 
  define an incomplete locally Lipschitz curve, in the same spirit as 
  Definition \ref{Def:incomplete_geo}. We then show that a locally
  Lipschitz curve with endpoint is extendible, i.e., that the curve is
  not incomplete.

  For the third point we rely on a result from Hawking and Ellis, Proposition
  6.4.8 of \cite{HawkingEllis1973}, to get the needed contradiction.
  Paraphrased to accommodate our definitions of extension
  and total imprisonment \cite[Definition 7.29]{BeemEhrlichEasley1996} (note
  that we differ here from 
  Beem, Ehrlich and Easley as we use the
  `total imprisonment' of Hawking and Ellis which is 
  the 
  `imprisonment' of Beem, Ehrlich and Easley)
  the result is as follows:
  \begin{Prop}[{\cite[Proposition 6.4.8]{HawkingEllis1973}}]
    \label{prop.hawkellis}
    If the future or past distinguishing condition holds on a compact set $S$,
    there can be no inextendible causal curve totally
    imprisoned in $S$.
  \end{Prop}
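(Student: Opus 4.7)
The plan is to argue by contradiction. Suppose $\gamma:[a,b)\to S$ is a future-inextendible causal curve totally imprisoned in the compact set $S$; the past-inextendible case is dual. Assume first that the past distinguishing condition holds on $S$; the future distinguishing case follows by reversing time orientation. Since $\gamma$ is imprisoned in $S$ and $S$ is compact, any full sequence $\{\gamma(t_n)\}$ in $\gamma$ has a convergent subsequence, so $\gamma$ admits at least one limit point $p\in S$, i.e., there is a full sequence $\{t_n\}\subset[a,b)$ with $\gamma(t_n)\to p$.

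The crux is to show that $\gamma$ has a unique limit point, whence $\gamma\to p$. Suppose instead that $q\in S$ is a second limit point with $q\neq p$, approached along a full sequence $\gamma(s_m)\to q$. By passing to common subsequences we can arrange $t_n<s_n<t_{n+1}$, so $\gamma$ oscillates infinitely often between arbitrarily small neighbourhoods of $p$ and of $q$. The aim is to exploit this oscillation, together with past distinguishing at $p$, to deduce $I^-(p)=I^-(q)$, which is the desired contradiction. Given $r\in I^-(p)$, choose an intermediate point $r'$ with $r\ll r'\ll p$. Since $I^+(r')$ is open and contains $p$, eventually $r'\ll\gamma(t_n)$; and for $s_m>t_n$ we have $\gamma(t_n)\leq\gamma(s_m)$, so concatenation gives $r'\ll\gamma(s_m)$, whence $\gamma(s_m)\in I^+(r')$. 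Letting $m\to\infty$ puts $q\in\overline{I^+(r')}$, and a limit curve construction along the oscillation (together with $r\ll r'$) then upgrades this to $r\ll q$, i.e. $r\in I^-(q)$. The symmetric argument reversing the roles of $p$ and $q$ gives $I^-(q)\subset I^-(p)$, so $I^-(p)=I^-(q)$, contradicting past distinguishing at $p$.

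Once uniqueness is established, $\gamma\to p$, and it remains to contradict inextendibility by extending $\gamma$ past $p$. This is carried out exactly as in Case 2.1 of the proof of Theorem \ref{thm.abst_original}: pick a convex normal neighbourhood $N$ of $p$, note that $\gamma$ is eventually contained in $N$, and concatenate $\gamma$ with any short future-directed causal curve issuing from $p$ inside $N$. The concatenation is a causal curve that properly extends $\gamma$, contradicting inextendibility.

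The main obstacle is the uniqueness step. Because the distinguishing condition is strictly weaker than strong causality, its standard characterisations apply most naturally to causal curves that \emph{issue from} the distinguishing point, whereas here $\gamma$ merely accumulates at $p$. Transferring chronological information from $r\in I^-(p)$ to a relation with the second limit point $q$ therefore requires the auxiliary intermediate point $r'$ and a limit curve argument, circumventing the fact that $J^+$ need not be topologically closed. This careful transfer of chronological relations along the oscillating curve, rather than the extension step, is the real content of the argument and is the essence of the original Hawking--Ellis proof.
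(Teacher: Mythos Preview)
The paper does not prove this proposition; it is quoted from Hawking--Ellis and used as a black box in the proof of Theorem~\ref{CurLim:Thm.NewAbstractBoundarySingularitTheorem}. So the relevant comparison is with the original Hawking--Ellis argument.

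Your strategy --- force the imprisoned curve to have a \emph{unique} limit point by showing that any two limit points $p\neq q$ satisfy $I^-(p)=I^-(q)$ --- is not the Hawking--Ellis approach, and the key step has a genuine gap. From $r'\ll\gamma(s_m)$ and $\gamma(s_m)\to q$ you correctly get $q\in\overline{I^+(r')}$, but the asserted ``upgrade'' to $r\ll q$ via an unspecified ``limit curve construction along the oscillation'' is not justified. The relation $r\ll r'$ yields only $\overline{I^+(r')}\subset\overline{I^+(r)}$, not $\overline{I^+(r')}\subset I^+(r)$. If one applies the limit-curve lemma to the timelike curves from $r'$ to $\gamma(s_m)$, or to the causal arcs $\gamma|_{[t_n,s_n]}$, the resulting future-directed causal limit curve need not terminate at $q$: in the absence of strong causality it may well be future-inextendible and imprisoned in $S$ without ever reaching $q$, in which case no causal relation between $r'$ (or $p$) and $q$ has been produced. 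You explicitly identify this as ``the main obstacle'' but do not actually overcome it; the phrase ``this careful transfer of chronological relations \ldots\ is the essence of the original Hawking--Ellis proof'' is not an argument.

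The Hawking--Ellis proof proceeds differently and does not claim that arbitrary limit points of $\gamma$ share the same chronological past. One covers $S$ by finitely many convex normal neighbourhoods, observes that $\gamma$ must enter and leave one of them infinitely often, and uses iterated limit-curve constructions to manufacture an inextendible null geodesic $\lambda$ imprisoned in $S$. The violation of the distinguishing condition is then exhibited for suitably chosen points associated with $\lambda$, where the needed causal relations are available directly (points lie \emph{on} a single causal curve) rather than only in the closure of a chronological set. In short, Hawking--Ellis locate the failure of distinguishing on an auxiliary object built from $\gamma$, whereas you try to read it off the limit set of $\gamma$ itself; the latter requires precisely the closure-to-interior passage that your sketch does not supply.
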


  Note that the definition of causal curve {\cite[Page
  184]{HawkingEllis1973}} used by Hawking and Ellis is the same as ours,
  Definition \ref{def.ccurve}; thus Proposition \ref{prop.hawkellis}
  can be applied in our situation.

\section{The bounded parameter property and incompleteness for 
locally Lipschitz curves}

  \subsection{The bounded parameter property for locally Lipschitz curves}

  A generalized affine parameter
  is given as the arc length of a curve, with respect to a Riemannian metric,
  induced by an orthonormal frame which is parallelly propagated along the
  curve; see \cite[Page 208]{BeemEhrlichEasley1996}. Because of the need for
  parallel propagation, generalized affine parameters are usually defined on
  $C^1$ curves. There are, however, existence and uniqueness results for
  ordinary differential equations involving functions with weaker regularity
  than $C^1$. We exploit these results to define parallel propagation along
  locally Lipschitz curves and thus to equip them with a generalized 
  affine parameter.
  We go through the proof in detail.

  \begin{Prop}\label{continuous parallel propagation}
   Let $\gamma:[a,b)\to\man M$ be a locally Lipschitz curve. 
   Then for all $\tau\in [a,b)$ (excluding a set of Lebesgue measure zero)
   and all $v\in T_{\gamma(\tau)}\man M$ there
   exists $V:[a,b)\to T\man M$, a vector field 
   on $\gamma$, which is absolutely continuous on
   compact subsets of $[a,b)$ and so that $V(\tau)=v$ and 
   $\nabla_{\gamma'}V=0$ except on a set of Lebesgue measure zero.
  \end{Prop}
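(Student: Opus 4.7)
The plan is to realise parallel transport along $\gamma$ as a first-order linear ODE in local coordinates whose coefficients are only measurable and locally essentially bounded, invoke Carath\'eodory's existence theorem for such equations, and then patch local solutions into a global one using the uniqueness afforded by linearity. Fix a coordinate chart $(U,\phi)$ and a subinterval $J\subset[a,b)$ with $\gamma(J)\subset U$; writing $V=V^i\partial_i$, the equation $\nabla_{\gamma'}V=0$ becomes
\[
  \dot V^i(t)+\Gamma^i_{jk}(\gamma(t))\,\dot\gamma^j(t)\,V^k(t)=0,
\]
which I abbreviate as $\dot V(t)=A(t)V(t)$. Because $\gamma$ is locally Lipschitz, $\dot\gamma$ exists almost everywhere and is locally essentially bounded on $J$; since the Christoffel symbols are continuous on $U$ (the metric being $C^{k}$ with $k\ge 1$) and $\gamma$ sends each compact subinterval of $J$ into a relatively compact subset of $U$, the matrix-valued coefficient $A$ is measurable and locally essentially bounded on $J$.

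I would next apply Carath\'eodory's existence theorem (Theorem 2.1.3 of \cite{CoddingtonTheory}) with initial data $V(\tau)=v$ to produce a local absolutely continuous solution satisfying the ODE almost everywhere. Because the system is linear with $L^\infty_{\mathrm{loc}}$ coefficients, Gronwall's inequality simultaneously rules out finite-time blow-up, and so the solution extends to all of $J$, and forces uniqueness by applying it to the difference of two purported solutions sharing the same initial condition.

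The last step is to patch across charts: cover $\gamma([a,b))$ by a locally finite family of coordinate charts and, starting from $\tau$, extend the solution chart by chart in both directions; on each overlap uniqueness forces the two locally constructed solutions to agree, so the pieces glue into a vector field $V:[a,b)\to T\man M$ that is absolutely continuous on compact subsets of $[a,b)$ and satisfies $\nabla_{\gamma'}V=0$ outside the measure-zero set on which $\dot\gamma$ itself fails to exist. The exceptional set of admissible initial times $\tau$ appearing in the statement can be taken to be this same measure-zero set, since the parallel transport equation only makes pointwise sense where $\dot\gamma$ is defined. The main obstacle is not the existence or uniqueness \emph{per se} --- once one has $A\in L^\infty_{\mathrm{loc}}$, both follow from classical linear ODE theory in the Carath\'eodory setting --- but rather the discipline of carrying measurable/absolutely-continuous regularity consistently through the nonlinear coordinate changes used in the gluing, and of verifying that absolute continuity of $V$ on compact subsets of $[a,b)$ is preserved by the construction.
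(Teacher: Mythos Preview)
Your proposal is correct and follows essentially the same architecture as the paper: write parallel transport as a linear Carath\'eodory ODE in coordinates, use that $\dot\gamma$ is measurable and locally essentially bounded to get an integrable majorant, apply the Carath\'eodory existence theory from \cite{CoddingtonTheory} together with Gronwall to obtain existence on each compact coordinate subinterval, and then propagate chart to chart using uniqueness on overlaps. One small difference worth noting: you extract uniqueness directly from Gronwall applied to the difference of two solutions (exploiting linearity), whereas the paper invokes a separate uniqueness theorem (Theorem~I.5.3 of \cite{Hale2009Ordinary}) after verifying the Lipschitz estimate $|f(t,x)-f(t,y)|\leq m(t)|x-y|$; your route is slightly more economical but amounts to the same thing. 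Also, the paper cites Theorem~2.1.1 of \cite{CoddingtonTheory} for local existence and reserves Theorem~2.1.3 for the extension-to-the-boundary step, whereas you point to 2.1.3 for existence; this is a harmless labeling slip and does not affect the argument.
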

  \begin{proof}
    Since $\gamma$ is locally Lipschitz, $\gamma'(t)$ exists for almost 
    all $t\in[a,b)$;
    see \cite[Page 75]{BeemEhrlichEasley1996}.
    The equations for parallel propagation of a vector, $X$,
    along $\gamma$ in 
    some coordinate chart $\phi$
    are
    \[
    \frac{d}{dt}X^i(t)+(\gamma')^j(t)X^k(t)\Gamma_{jk}^i(t)=0.
    \]
    We will apply Carath\'eodory's Theorem 
    \cite[Theorem 2.1.1]{CoddingtonTheory} to this equation to prove the 
    existence of $X^i$ on some compact connected subinterval $I_\phi\subset [a,b)$, 
    such that
    ${\gamma(I_\phi)}$ is contained in the chart $\phi$. 
    As initial conditions we take
    $\tau\in \text{interior}(I_\phi)$ so that $\gamma'(\tau)$ exists and
    $v\in T_{\gamma(\tau)}\man{M}$.
    
    With this purpose in mind, let $\partial_i$, $i=1,\ldots,n$,
    be the coordinate vectors
    for the chart $\phi$. We can view $(\gamma')^j$ and $\Gamma_{jk}^i$
    as functions from $I_\phi$ to $\mathbb{R}$.
    Define $\hat{\gamma}^i:I_\phi\to\mathbb{R}$ by
    \[
      \hat{\gamma}^i(t)=\left\{\begin{aligned}
        (\gamma')^i(t) &&\quad \text{if}\ (\gamma')^i(t)\ \text{is defined}\\
        0 &&\quad \text{otherwise}.
      \end{aligned}\right.
    \]
    
    The vector $v\in T_{\gamma(\tau)}\man{M}$
    can be written as $v=v^i\partial_i$ where $v^i\in\mathbb{R}$.
    We consider $(\tau,v^1,\ldots,v^n)$ as a point
    in $\mathbb{R}^{n+1}$. 
    Let $|\cdot|$ be the $L_1$ norm on $\mathbb{R}^n$
    given by $|(x^1,\ldots,x^n)|=\sum_i|x^i|$. 
    We will write $x$ for $(x^1,\ldots,x^n)$. Thus 
    $\lvert x(t)\rvert= |(x^1(t),\ldots,x^n(t))|$.
    
    Choose $c>\max\{|v|,1\}$ an otherwise arbitrary positive constant. Let
    $R=\text{interior}\left(I_\phi\right)\times\{x\in\mathbb{R}^{n}:|x|<c\}$. 
    Thus $(\tau,v)\in R$.
    Letting $(t,x)\in\mathbb{R}^{n+1}$,
    we can define the functions $f^i:R\to\mathbb{R}$,
    $i=1,\ldots,n$, by
    \[
      f^i(t,x^1,\ldots,x^n)=\sum_{j,k}\hat{\gamma}^j(t)x^k\Gamma_{jk}^i(t).
    \]
    Since $\gamma$ is Lipschitz on $I_\phi$, we can see that $f^i$ is 
    continuous in $x^k$, $k =1,\ldots,n$, 
    for all fixed $t$ and 
    that $f^i$ is Lebesgue measurable in $t$ for all fixed $x^1,\ldots,x^n$.
    We will write $f(t,x)$ for $(f^1(t,x),\ldots,f^n(t,x))$.      
    
    By the Lipschitz condition on $\gamma$ and as each $\Gamma_{jk}^i$
    is continuous we know that the function 
    $m:I_\phi\to\mathbb{R}$ given by
    \[
      m(t)=c\sum_{i,j,k}|\hat{\gamma}^j(t)\Gamma_{jk}^i(t)|
    \]
    is Lebesgue integrable.      
    In particular, noting that $|x+y|\leq |x|+|y|$,
    some algebra shows that for all 
    $(t,x)\in R$ we have that
    $
      |f(t,x)|<m(t).
    $
    
    Carath\'eodory's Theorem 
    \cite[Theorem 2.1.1]{CoddingtonTheory}, rephrased for our particular
    situation, states that:
    \begin{Cara}
      Let $f^i$, $i=1,\ldots,n$, 
      be defined on $R$, 
      and suppose each $f^i$ is Lebesgue measurable in $t$
      for each fixed $x$ and continuous in $x$ for each fixed $t$.
      If there exists a Lebesgue integrable function 
      $m:I_\phi\to\mathbb{R}$ so that 
      $|f(t,x)|<m(t)$,
      then there exists
      $J\subset I_\phi$, 
      a subinterval of $I_\phi$, so that $\tau\in J$ and
      for each $i=1,\ldots,n$ there exists an absolutely continuous function
      $y^i(t):J\to\mathbb{R}$ so that
      $(t,y^1(t),\ldots,y^n(t))\in R$, $y^i(\tau)=v^i$, and so that
      \begin{equation}\label{ODE}
        \frac{d}{dt}y^i(t)+f^i(t,y^1(t),\ldots,y^n(t))=0
      \end{equation}
      for almost all $t$.
    \end{Cara}
    
    Since the functions $f$ and $m$ satisfy the conditions of the 
    theorem we know that on some subinterval $J$ of $I_\phi$, with $\tau\in J$,
    an absolutely continuous solution $X^i:J\to\mathbb{R}$, $i=1,\ldots,n$, 
    exists 
    except on a set of Lebesgue measure zero. Since $J\subset I_\phi$ could be
    taken to be very small, we only know the local existence of a solution
    about $\tau$.  
    
    To prove global existence on $I_\phi$ it is necessary to find an upper
    bound on the length of solutions with initial condition $v$ at $\tau$.
    Since each $\Gamma^i_{jk}$ is defined on all of $\phi$, as
    $\gamma$ is Lipschitz on $I_\phi$ and as $I_\phi$ is compact there
    exists $C>0$ so that 
    $C>\max_{k,t}\{\sum_{i,j}\lvert\hat{\gamma}^j(t)\Gamma_{jk}^i(t)\rvert\}$.
    Suppose that $X(t)$ is a local solution about $\tau$ so that 
    $X(\tau)=v$.
    We calculate that
    \begin{align*}
      \frac{d}{dt}\lvert X(t)\rvert &=\sum_{i}\frac{d}{dt}\lvert X^i(t)\rvert
          \leq \sum_{i}\lvert \frac{d}{dt} X^i(t)\rvert\\
          &= \sum_{i,j,k}
            \lvert \hat{\gamma}^j(t) X^k(t) \Gamma_{jk}^i(t) \rvert\\
          &\leq C \sum_k\lvert X^k(t)\rvert
          = C \lvert X(t)\rvert.
    \end{align*}  
    Gronwall's inequality \cite[Page 624]{Evans2010} now implies that
    there exists $K\in\mathbb{R}^+$ such that for all $t\in I_\phi$,
    \[
      |X(t)| \leq K\exp\left(Ct\right).
    \]
    Since $I_\phi$ is a compact subinterval of $I=[a,b)$, the function
    $\exp(Ct)$ has a maximum value on $I_\phi$. Hence, we know that if $X(t)$
    is a solution, then there exists $b\in\mathbb{R}^+$ so that
    $|X(t)|<b$ on $I_\phi$.   
    Since $c$ was arbitrary we can choose $c$ so that 
    $c>\max\{|v|,1,b\}$.
    
    Theorem 2.1.3 of \cite{CoddingtonTheory} now allows us to 
    conclude global existence; rephrased for our situation it
    states:
    \begin{Thm2.1.3}
      Let $R$ be an open, connected subset of $\mathbb{R}^{n+1}$,
      with points $(t,x)$, so that
      for each $i=1,\ldots,n$, the
      function $f^i$ is defined on $R$, Lebesgue measurable in $t$ for 
      each fixed $x$ and
      continuous in $x$ for each fixed $t$. If there exists
      a Lebesgue integrable function $m(t)$ so that
      $|f(t,x)|<m(t)$ for all $(t,x)\in R$, then any solution 
      $y^i$ of the system \eqref{ODE}
      in the sense of Carath\'eodory's Theorem can be extended to the 
      boundary of $R$.
    \end{Thm2.1.3}
    Since $f$ and $m(t)$ satisfy the conditions
    of the theorem, we know that the solution, 
    $X(t)$, extends to the boundary of $R$.
    The boundary of $R$ is 
    \[
      \text{interior}(I_\phi)\times\{x\in \mathbb{R}^{n}: |x|=c\}\cup
      \partial I_\phi\times\{x\in\mathbb{R}^n: |x|\leq c\},
    \]
    where $\partial I_\phi$ is the boundary of $I_\phi$ in $\mathbb{R}$.
    By construction we know that $|X(t)|<b$, hence, as
    $c>\max\{|v|,1,b\}$, the intersection of the 
    solution $(t, X^1(t),\ldots, X^n(t))$ with the
    boundary of $R$ cannot lie in  
    $\text{interior}(I_\phi)\times\{x\in \mathbb{R}^{n}: |x|=c\}$.
    Therefore 
    the solution must extend to 
    $\partial I_\phi\times\{x\in\mathbb{R}^n: |x|<c\}$. This implies
    that the solution is defined on
    all of $I_\phi$.

    To prove uniqueness we use Theorem I.5.3 of \cite{Hale2009Ordinary}.
    This theorem is a special case of 
    Theorem 2.2.1 of \cite{CoddingtonTheory} which better fits our
    situation.
    \begin{ThmI.5.3}
      Suppose that $R$ is an open set in $\mathbb{R}^{n+1}$, with points
      $(t,x)$,
      and for each $i=1,\ldots, n$,
      $f^i$ is defined on $R$, Lebesgue measurable in $t$ for 
      each fixed $x$ and
      continuous in $x$ for each fixed $t$. Suppose that
      for each compact set $U$ in $R$, there
      is a Lebesgue integrable function $m_U(t)$ such that
      for all $(t,x),(t,y)\in U$,
      \[
        |f(t,x)-f(t,y)|<m_U(t)|x-y|.
      \]
      Then for any $(t_0,x_0)\in U$ there exists a unique solution
      to the system \eqref{ODE}.
    \end{ThmI.5.3}

    Since $f$ and $m$ satisfy the conditions of the theorem and as
    \begin{align*}
      |f(t,x)-f(t,y)|&=
        \sum_{i}\lvert 
          \sum_{j,k}\hat\gamma^j(t)\Gamma_{jk}^i(x^k-y^k)\rvert 
        \leq\sum_{i,j,k}\lvert \hat\gamma^j(t)\Gamma_{jk}^i\rvert 
          \lvert x^k-y^k\rvert \\
        &\leq\sum_{i,j,k}\lvert \hat\gamma^j(t)\Gamma_{jk}^i\rvert 
          \lvert x^k-y^k\rvert +
          \sum_{i,j,k,l\neq k}\lvert \hat\gamma^j(t)\Gamma_{jl}^i\rvert 
          \lvert x^k-y^k\rvert \\
        &=\sum_{i,j,k,l}\lvert \hat\gamma^j(t)\Gamma_{jk}^i\rvert 
          \lvert x^l-y^l\rvert \\
        &=\left(\sum_{i,j,k}|\hat\gamma^j(t)\Gamma_{jk}^i|\right)
          \left(\sum_l|x^l-y^l|\right)\\
        &=\frac{m(t)}{c}|x-y|
        < m(t)|x-y|,
    \end{align*}
    holds on all of $R$,
    where we have used our requirement that $c>1$, our
    solution, $X^i:I_\phi\to\mathbb{R}$, $i=1,\ldots,n$, is unique.
    Hence for $i=1,\ldots,n$ there exists 
    a unique and absolutely continuous 
    function $X^i:I_\phi\to\mathbb{R}$ so that
    $X^i(\tau)=v^i$ and
    \[
      \frac{d}{d\tau}X^i(t)+(\gamma')^j(t)X^k(t)\Gamma_{jk}^i(t)=0
    \]
    for almost all $t$. 
    
    Since $\man{M}$ is paracompact and Hausdorff and as 
    $[a,b)\subset\mathbb{R}$ 
    we know that there exists a covering of $[a,b)$ by 
    a countable collection of compact connected intervals
    $\{I_i\subset [a,b):i\in\mathbb{N}\}$ so that 
    $a\in I_1$,
    $I_i\cap I_{i+1}\neq \EmptySet$, $I_{i-1}\cap I_i\neq\EmptySet$ and
    $I_i\cap I_j=\EmptySet$ if $j\neq i, i\pm 1$ and so that for each $i$
    there exists a chart $\phi_i$ so that $\gamma(I_i)$ lies in $\phi_i$.
    Without loss of generality we assume that $\phi=\phi_l$ and that 
    $I_\phi = I_l$ for some $l\in\mathbb{N}$.
          
    Suppose that there exist $X^k:I_i\to\mathbb{R}$, $k=1,\ldots,n$,
    absolutely continuous functions satisfying
    \eqref{ODE} except on a set of Lebesgue measure zero.
    As $I_i\cap I_{i+1}$ has non-zero Lebesgue measure there exists
    $t_{i,i+1}\in I_i\cap I_{i+1}$ such that each $X^k$ exists at
    $t_{i,i+1}$. From above, 
    there exists, except on a set of Lebesgue measure zero,
    an absolutely continuous solution $Y^k:I_{i+1}\to\mathbb{R}$
    of \eqref{ODE}
    so that for each $k=1,\ldots,n$, we have $Y^k(t_{i,i+1})=X^k(t_{i,i+1})$.
    By uniqueness of the solution we must have 
    $Y^k=X^k$ on $I_i\cap I_{i+1}$
    and therefore we can extend our solution to $I_i\cup I_{i+1}$ in an
    absolutely continuous way. The same argument can be used to 
    show that the solution on $I_i$ can be extended to $I_{i-1}\cup I_i$.
    By induction, and as we have proven existence
    on $I_l$, we can extend our solution to
    all of $[a,b)$. We denote the vector that
    results from this extension by $V$.
    
    Therefore, we have that
    $V$ is a vector field on $\gamma$, which is
    absolutely continuous on compact subsets of $[a,b)$ 
    and so that $V(\tau)=v$
    and $\nabla_{\gamma'}V=0$ except on a set of Lebesgue 
    measure zero, as required.
  \end{proof}

  Given two such parallelly
  propagated vector fields $X,Y$ on a locally Lipschitz curve
  $\gamma:[a,b)\to \man{M}$ then 
  $g(X,Y):[a,b)\to \mathbb{R}$ is a function 
  that is absolutely continuous on compact subsets of $[a,b)$
  and
  $\gamma'(g(X,Y))=0$ almost
  everywhere. From the comments just before Proposition 9.6.4, from
  Proposition 9.6.4 itself and from Proposition 9.6.6 of
  \cite{HaaserSullivan1991} the function $g(X,Y)$ is constant on
  $[a,b)$. Thus, if $X$ and $Y$ are chosen so that
  for some $t\in [a,b)$, $g(X,Y)(t)=c$ then
  $g(X,Y)(\tau)=c$ for all $\tau\in [a,b)$.
  
  Using similar arguments it is possible to show that 
  this definition of parallel propagation on locally Lipschitz curves
  satisfies all the expected properties.

  \begin{Def}\label{continuousGAP}
   Let $\gamma:[a,b)\to \man{M}$ be a locally Lipschitz curve, 
   choose
   $c\in\mathbb{R}$,
   $t_0\in [a,b)$ and let
   $X_1,\ldots,X_n$ be a frame of
   linearly independent 
   vector fields on $\gamma$ which are absolutely continuous
   on compact subsets of $[a,b)$ and so that
   $\nabla_{\gamma'}X_i=0$ almost everywhere. Then we may write
   $\gamma'=(\gamma')^iX_i$ for almost all $t\in [a,b)$. Since
   $\gamma$ is locally Lipschitz the 
   function $\tau:[a,b)\to\mathbb{R}$ given by,
   \[
     \tau(t)=
     \bigint_{\displaystyle t_0}^{\displaystyle t}
       \sqrt{\sum_{i=1}^n\left((\gamma')^i(u)\right)^2}\,\textrm{d} u
       \ensuremath{ + c},
   \]
   is well defined.
   We shall call $\tau$ a generalized affine parameter.
  \end{Def}

  It is clear that Definition \ref{continuousGAP} when
  applied to $C^1$ curves reproduces the standard generalized affine
  parameter. Thus Definition \ref{continuousGAP} 
  is a generalization of the generalized affine parameter for locally Lipschitz
  curves.

  Unlike general locally Lipschitz changes of parameter,
  any locally Lipschitz curve reparametrized with a generalized affine
  parameter remains locally Lipschitz. This is because,
  if $\gamma$ is a locally Lipschitz curve and $\mu$ is $\gamma$
  reparametrized with the generalized affine parameter $\tau$, 
  i.e.\ $\mu(\tau(t))=\gamma(t)$, then
  the length of $\mu'(\tau)$ is $1$ in the Riemannian 
  metric induced by the frame used to define $\tau$. Restricting to
  relatively compact subsets of the domain of $\tau$ and translating
  this result into a coordinate frame demonstrates that $\mu$ is
  locally Lipschitz.

  Since, by assumption, every continuous causal curve
  is also locally Lipschitz, continuous causal curves can 
  be given generalized affine parameters.

  \begin{Def}
    Let $\mathcal{C}_{\text{ll}}(\man{M})$ be the set of 
    all locally Lipschitz curves in $\man{M}$ with generalized affine parameters.
    Let $\mathcal{C}_{\text{cc}}(\man{M})$ be the set of all continuous
    causal curves in $\man{M}$ with generalized affine parameters. 
  \end{Def}
    
    By Definition \ref{def.ccurve},
    $\mathcal{C}_{\text{cc}}(\man{M})\subset\mathcal{C}_{\text{ll}}(\man{M})$.

  \begin{Prop}
    The sets $\mathcal{C}_{\text{ll}}(\man{M})$ 
    and $\mathcal{C}_{\text{cc}}(\man{M})$ satisfy the bounded parameter
    property.
  \end{Prop}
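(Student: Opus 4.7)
The plan is to verify the three clauses of Definition \ref{Def.bbp} for $\mathcal{C}_{\text{ll}}(\man{M})$ and then deduce the statement for $\mathcal{C}_{\text{cc}}(\man{M})$. Clauses (1) and (2) are essentially routine. For (1), through any $p\in\man{M}$ take any smooth regular curve (for example a coordinate line), which is automatically locally Lipschitz; Definition \ref{continuousGAP} provides a generalized affine parameter, putting a member of $\mathcal{C}_{\text{ll}}(\man{M})$ through $p$. For $\mathcal{C}_{\text{cc}}(\man{M})$, use instead a timelike geodesic through $p$. For (2), the restriction of a locally Lipschitz curve to a subinterval is plainly locally Lipschitz, the parallelly propagated frame from Proposition \ref{continuous parallel propagation} restricts to a parallelly propagated frame on the subinterval, and the integral in Definition \ref{continuousGAP}, evaluated on the subinterval with a relocated base point, produces a generalized affine parameter on the subcurve (absorbed into the additive constant $c$). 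The continuous causal property is also preserved under restriction, so subcurves of elements of $\mathcal{C}_{\text{cc}}(\man{M})$ lie in $\mathcal{C}_{\text{cc}}(\man{M})$.

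Clause (3) is the substantive content. Suppose $\gamma:[a,b)\to\man{M}$ and $\lambda:[a',b')\to\man{M}$ are both in $\mathcal{C}_{\text{ll}}(\man{M})$ and $\lambda=\gamma\circ f$ for a locally Lipschitz, strictly monotonically increasing surjection $f:[a',b')\to[a,b)$. Let $X_1,\dots,X_n$ be the parallelly propagated frame on $\gamma$ giving its generalized affine parameter and let $Y_1,\dots,Y_n$ be the analogous frame on $\lambda$. Because $\gamma$ and $\lambda$ carry their own generalized affine parameters, differentiating Definition \ref{continuousGAP} yields $\sum_i((\gamma')^i_X)^2 = 1$ and $\sum_i((\lambda')^i_Y)^2 = 1$ almost everywhere.

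The hard part is to compare these two frames. Define $\tilde Y_i$ along $\gamma$ to be the vector field parallelly propagated along $\gamma$ (in the sense of Proposition \ref{continuous parallel propagation}) with $\tilde Y_i(f(s_0)) = Y_i(s_0)$ at some fixed $s_0$. A direct computation using the chain rule shows that $\tilde Y_i \circ f$ solves the parallel transport equation along $\lambda$ with the same initial data as $Y_i$, so by the uniqueness half of Theorem I.5.3 applied within Proposition \ref{continuous parallel propagation} we have $\tilde Y_i \circ f = Y_i$. Because the parallel transport equation is linear in the transported vector, the same uniqueness implies that parallel transport along $\gamma$ is a linear map between tangent spaces; hence the two parallelly propagated frames $X$ and $\tilde Y$ on $\gamma$ are related by a single constant invertible matrix $A$, independent of $t$. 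Consequently the Riemannian norms $|\cdot|_X$ and $|\cdot|_{\tilde Y}$ on the tangent spaces along $\gamma$ are equivalent with uniform constants $c_1,c_2>0$ depending only on $A$.

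To finish, note that the arc length of a locally Lipschitz curve in a Riemannian metric is a geometric invariant (independent of parametrization), so the $g_{\tilde Y}$-arc length of $\gamma|_{[a,t)}$ equals the $g_Y$-arc length of $\lambda|_{[a',f^{-1}(t))}$, which is $f^{-1}(t)-a'$ since $|\lambda'|_Y=1$ a.e. On the other hand, $|\gamma'|_X=1$ a.e. gives $\int_a^t|\gamma'|_X\,du = t-a$, and the equivalence of norms yields
\[
    c_1^{1/2}\,(t-a) \;\leq\; f^{-1}(t)-a' \;\leq\; c_2^{1/2}\,(t-a)
\]
for all $t\in[a,b)$. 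Letting $t\to b$ shows $b-a$ is finite if and only if $b'-a'$ is finite, establishing clause (3). The main obstacle is precisely this comparison step: one must justify both the passage from $Y$ to a genuine parallelly propagated frame $\tilde Y$ along $\gamma$ and the constancy of the change-of-frame matrix, each of which leans on the linearity and uniqueness portions of Proposition \ref{continuous parallel propagation}. Since $\mathcal{C}_{\text{cc}}(\man{M})\subset\mathcal{C}_{\text{ll}}(\man{M})$ and any change of parameter between two elements of $\mathcal{C}_{\text{cc}}(\man{M})$ is in particular a change of parameter between elements of $\mathcal{C}_{\text{ll}}(\man{M})$, clause (3) for $\mathcal{C}_{\text{cc}}(\man{M})$ follows immediately from the locally Lipschitz case.
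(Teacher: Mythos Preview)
Your argument is correct and follows the same skeleton as the paper's proof: verify clauses (1) and (2) routinely, and for clause (3) show that any two parallelly propagated frames along the same curve are related by a constant invertible matrix, so the associated Riemannian norms are uniformly equivalent and hence one generalized affine parameter is bounded if and only if the other is. The paper compresses clause (3) into a single sentence, citing the constancy of $g(X_i,Y_j)$ for parallel $X_i,Y_j$ (established just after Proposition~\ref{continuous parallel propagation}) and invoking the standard Hawking--Ellis argument; you instead derive the constancy of the change-of-frame matrix from linearity and uniqueness of the parallel transport ODE and then write out the arc-length comparison explicitly. These are two routes to the same fact, and your explicit handling of the reparametrization $f$ (transporting $Y_i$ to $\tilde Y_i$ along $\gamma$ and using uniqueness to identify $\tilde Y_i\circ f$ with $Y_i$) makes visible a step the paper leaves implicit in the phrase ``the standard proof carries over''.
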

  \begin{proof}
    Through any point $p\in \man{M}$ there exists a causal geodesic. Such a
    geodesic is a continuous causal curve and therefore an element of
    $\mathcal{C}_{\text{cc}}(\man{M})$, hence also of
    $\mathcal{C}_{\text{ll}}(\man{M})$.
    It is clear that any subcurve of a generalized affinely parametrized
    locally Lipschitz curve is also a generalized affinely parametrized
    locally Lipschitz
    curve. Likewise, the definition of a continuous causal curve makes it clear
    that a subcurve of a 
    generalized affinely parametrized
    continuous causal curve is also a 
    generalized affinely parametrized
    continuous causal
    curve.
    Since the inner products of parallelly propagated vectors along locally
    Lipschitz
    curves are constant, the standard proof \cite[Page 259]{HawkingEllis1973}
    that if one generalized
    affine parameter (on $C^1$ curves) is bounded then every generalized affine
    parameter (on $C^1$ curves) is bounded carries over to locally 
    Lipschitz curves.
    Thus,
    if two generalized affinely
    parametrized locally Lipschitz curves are related by a change of parameter
    then either both parameters are bounded or both are unbounded. Since
    generalized affinely parametrized
    continuous causal curves are generalized affinely parametrized
    locally Lipschitz curves this holds for generalized affinely parametrized 
    continuous
    causal curves as well.
    Hence, the bounded parameter property holds on both sets.
  \end{proof}

  Thus, we have identified a bounded parameter property satisfying set of
  curves that is larger than the set of all affinely parametrized causal
  geodesics and includes all continuous causal curves.

  \subsection{Incompleteness for locally Lipschitz curves}

  We now give a definition of incompleteness 
  for locally Lipschitz curves and show that
  locally Lipschitz curves with endpoints are not incomplete. 
  To do this we mirror
  Definition \ref{Def:incomplete_geo}.

  \begin{Def}
    A generalized affinely parametrized locally 
    Lipschitz curve $\gamma:[a,b)\to \man{M}$
    is incomplete if $b<\infty$ and $\gamma$ is not
    extendible
    by any generalized affinely parametrized locally Lipschitz curve.
  \end{Def}

  The following technical lemma will be of use below.

  \begin{Lem}\label{Lem:extensionofLL}
    Let $\gamma:[a,b)\to \man{M}$ be a generalized affinely parametrized 
    locally Lipschitz
    curve and
    let
    $\lambda:[a',b')\to \man{M}$ be a locally Lipschitz curve
    such that there exists
    a change of parameter 
    $f:[a,b)\to[a',c')$, $c'\leq b'$, so that $\lambda\circ f = \gamma$.
    Then there exists
    $s:[a',b')\to[a,d)$, $b\leq d$, a generalized affine parameter on
    $\lambda$, so that $s\circ f$ is the identity on $[a,b)$,
    and 
    $c'<b'$ if and only if $b<d$.
  \end{Lem}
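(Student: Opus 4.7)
The plan is to build $s$ as the arc-length integral from Definition \ref{continuousGAP}, using a parallel frame along $\lambda$ that is chosen to match, via $f$, the frame implicit in $\gamma$'s parametrization.

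First, since $\gamma$ is generalized affinely parametrized, fix a parallel frame $\{X_i\}_{i=1}^{n}$ along $\gamma$ in which $\sqrt{\sum_i((\gamma')^i)^2}\equiv 1$ almost everywhere. Pick $t_0\in[a,b)$ in the full-measure subset on which $\gamma'(t_0)$ and $f'(t_0)$ both exist, so that $\lambda'(f(t_0))$ also exists by the chain rule. Apply Proposition \ref{continuous parallel propagation} $n$ times at $f(t_0)\in[a',c')$ with initial data $X_i(t_0)$ to obtain an absolutely continuous parallel frame $\{Y_i\}_{i=1}^{n}$ along all of $\lambda:[a',b')\to\man M$; linear independence is preserved because parallel transport is a fibrewise linear isomorphism. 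By the chain rule applied to the parallel-transport ODE for $\lambda$, the pulled-back frame $\{Y_i\circ f\}$ is parallel along $\gamma$ and agrees with $\{X_i\}$ at $t_0$; uniqueness (invoked in the proof of Proposition \ref{continuous parallel propagation} via Theorem I.5.3 of \cite{Hale2009Ordinary}) then forces $Y_i(f(t))=X_i(t)$ for every $t\in[a,b)$.

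Next, define
\[
  s(u)=\int_{a'}^{u}\sqrt{\sum_{i=1}^{n}\bigl((\lambda')^i(v)\bigr)^{2}}\,dv+a,\qquad u\in[a',b'),
\]
where the $(\lambda')^i$ are the components of $\lambda'$ in $\{Y_i\}$. By Definition \ref{continuousGAP} this is a generalized affine parameter on $\lambda$. Because $\lambda'$ is non-zero off a null set, $s$ is continuous and strictly increasing, mapping $[a',b')$ onto a half-open interval $[a,d)$ with $d=\lim_{u\to b'^{-}}s(u)\in(a,\infty]$. To verify $s\circ f=\mathrm{id}_{[a,b)}$, substitute $v=f(w)$; the frame-matching from the previous step gives $(\lambda')^i(f(w))\,f'(w)=(\gamma')^i(w)$ a.e., so the integrand collapses to $\sqrt{\sum_i((\gamma')^i(w))^{2}}\equiv 1$ and the integral becomes $t-a$.

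Finally, continuity and strict monotonicity of $s$, together with $s\circ f=\mathrm{id}$, give $\lim_{u\to c'^{-}}s(u)=\lim_{t\to b^{-}}t=b$; hence $d=b$ when $c'=b'$, while $b=s(c')<s(b'^{-})=d$ when $c'<b'$. This yields both $b\leq d$ and the equivalence $c'<b'\Leftrightarrow b<d$. The main obstacle I anticipate is the change-of-variables step itself: the chain rule $\gamma'=(\lambda'\circ f)f'$ and the substitution formula must hold almost everywhere for compositions of merely locally Lipschitz functions, which is the one place where one must work carefully with absolute continuity on compact subintervals (and Lusin's N property for Lipschitz maps) rather than appeal to the stronger $C^1$ theory used elsewhere.
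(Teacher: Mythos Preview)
Your proof is correct and follows essentially the same route as the paper: extend the parallel frame from $\gamma$ to $\lambda$, define $s$ as the generalized-affine-parameter integral in that frame, and use the change-of-variables formula (valid because $f$ is locally Lipschitz, hence absolutely continuous on compact subintervals) to verify $s\circ f=\mathrm{id}_{[a,b)}$. The only cosmetic differences are your choice of base point $a'$ with additive constant $a$ (the paper uses $f(t_0)$ and $t_0$, which yields the same $s$), and your treatment of the biconditional: you argue directly via $\lim_{u\to c'^-}s(u)=b$ and strict monotonicity, which is tidier than the paper's $\epsilon$-style case analysis.
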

  \begin{proof}
    Since $\gamma$ is generalized affinely parametrized there exists
    $t_0\in[a,b)$ and  
    a parallelly propagated frame $X_1,\ldots,X_n$ on $\gamma$ so that
    the parameter $t\in[a,b)$ is given by
    \[
      t=\bigint_{\displaystyle t_0}^{\displaystyle t}
      \sqrt{\sum_{i=1}^n{((\gamma')^i(\hat t\hspace{1pt}))^2}}\,\textrm{d}\hat t + t_0
    \]
    where $\gamma'=(\gamma')^iX_i$. 
    Since $\gamma(t)=\lambda\circ f(t)$ and $\lambda$ is locally Lipschitz
    the equation $\gamma'(t)=f'(t)\lambda'(f(t))$ holds almost everywhere.
    Therefore,
    \[
      0=\nabla_{\gamma'}X_i=f'\nabla_{\lambda'(f)}X_i.
    \]
    Hence, $\nabla_{\lambda'(f)}X_i=0$ almost everywhere and so the frame
    $X_1,\ldots,X_n$  can be extended to
    a parallelly propagated frame along $\lambda$.
    Let $\lambda'=(\lambda')^iX_i$ and
    let
    $s_0=f(t_0)$.
    This allows us to define a
    generalized affine
    parameter, $s$, on $\lambda$ by
    \[
      s(\tau_s)=\bigint_{\displaystyle s_0}^{\displaystyle \tau_s}
      \sqrt{\sum_{i}^n{((\lambda')^i(\hat s\hspace{1pt}))^2}}
      \,\textrm{d}\hat s + t_0,\quad\quad
      \tau_s\in[a',b').
    \]
    Choose $c,d\in\mathbb{R}\cup\{\infty\}$, $c<d$, so that
    $s:[a',b')\to[c,d)$ is surjective.

    Let $\tau_t\in [a,b)$ and let $\tau_s=f(\tau_t)$.
    Since $f$ is locally Lipschitz we know that $f$ is absolutely continuous on
    the interval $[t_0,\tau_t]$.
    Hence, \cite[Theorem 9.7.5]{HaaserSullivan1991} implies that we
    can use $f$ to change variables in the equation for $s$.
    From above $(\gamma')^i(t)=f'(t)(\lambda')^i(f(t))$ and therefore
    \begin{align*}
      s(\tau_s)&=\bigint_{\displaystyle s_0}^{\displaystyle \tau_s}
      \sqrt{\sum_{i}^n{((\lambda')^i(\hat s\hspace{1pt}))^2}}\,\textrm{d}\hat s + t_0
      =\bigint_{\displaystyle t_0}^{\displaystyle \tau_t}
      \sqrt{\sum_{i}^n{((\lambda')^i(f(\hat t\hspace{1pt})))^2}}\,f'(\hat t)
      \,\textrm{d}\hat
      t + t_0\\[6pt]
      &=\bigint_{\displaystyle t_0}^{\displaystyle \tau_t}
      \sqrt{\sum_{i}^n{((\gamma')^i(\hat t\hspace{1pt}))^2}}
      \,\textrm{d}\hat t + t_0= \tau_t.
    \end{align*}
    Hence, $s\circ f(\tau_t)=\tau_t$.
    Since $\tau_t$ was arbitrary
    this implies that $s\circ f(t) = t$ for all $t\in[a,b)$,
    as required.

    As $f$ is strictly monotonically
    increasing and surjective 
    we have that $f(a)=a'$. Since $\lambda$ is locally
    Lipschitz the function
    \[
      s'(\tau_s)=\sqrt{\sum_{i}^n{((\lambda')^i(\tau_s))^2}}
    \]
    is positive almost everywhere. Thus $s$ is strictly monotonically
    increasing. Hence, $s(a')=c$ and therefore, as $s\circ f(a)=a$, $c=a$.
    So $s:[a',b')\to[a,d)$ as required.

    Let $\epsilon>0$ be such that $b-\epsilon\in [a,b)$ then
    $s\circ f(b-\epsilon)\in[a,d)$. Hence $b-\epsilon < d$ for all
    such $\epsilon$. This implies that $b\leq d$, as required.

    Suppose that $c'<b'$ and, again,  
    let $\epsilon>0$ be such that $b-\epsilon\in[a,b)$. 
    Then $f(b-\epsilon)\in[a',c')$ and thus $f(b-\epsilon)<c'$. 
    Since $c'<b'$, $s(c')$ is defined.
    Hence $b-\epsilon=s\circ f(b-\epsilon)<s(c')$. As this is true
    for all such 
    $\epsilon>0$ we have that $b\leq s(c')$. Since $s(c')\in[a,d)$
    we see that $b<d$ as required.

    Suppose that $b<d$. Let $\epsilon>0$
    be such that $c'-\epsilon\in[a',c')$.
    Since $f$ is surjective there exists $x\in[a,b)$ so that
    $f(x)=c'-\epsilon$. Since $b<d$, $s^{-1}(b)$ is well defined.
    As $x\in[a,b)$ we have $x<b$ and hence $s^{-1}(x)<s^{-1}(b)$.
    By construction
    \[
      s^{-1}(x)=s^{-1}\circ s\circ f(x) = s^{-1}\circ
      s(c'-\epsilon)=c'-\epsilon.
    \]
    Thus $c'-\epsilon<s^{-1}(b)$. As this holds for all 
    such $\epsilon>0$, $c'\leq
    s^{-1}(b)$. Since $s^{-1}(b)\in[a',b')$ we see that $c'\leq s^{-1}(b)<b'$
    as required.
  \end{proof}

  \begin{Prop}\label{prop:llextend}
    Let $\gamma:[a,b)\to \man{M}$ be a generalized affinely parametrized 
    locally
    Lipschitz curve. If $\gamma$ has an endpoint then it is extendible
    and $b<\infty$.
  \end{Prop}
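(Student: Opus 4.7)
The plan is to take $p$ to be the endpoint of $\gamma$ and to work in a chart $\phi$ about $p$, fixing a relatively compact convex normal neighbourhood $U\subset\phi^{-1}(\mathbb{R}^n)$ of $p$. Convergence $\gamma\to p$ yields some $t_1\in[a,b)$ with $\gamma([t_1,b))\subset U$. The overall strategy is (i) to obtain a uniform bound on the coordinate velocity of $\gamma$ on $[t_1,b)$, (ii) to deduce $b<\infty$ and adjoin $p$ at parameter $b$ to get a locally Lipschitz $\lambda:[a,b]\to\man M$, and (iii) to attach a smooth geodesic from $p$ and reparametrize to a generalized affine parameter.

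For (i), the parallel frame $X_i$ underlying the GAP satisfies in chart coordinates the linear system
\[
  \dot X_i^j=-\dot\gamma^kX_i^l\Gamma^j_{kl}(\gamma),
\]
with Christoffel symbols bounded on $\overline U$. The GAP normalization $\sum_i((\gamma')^i)^2\equiv 1$ together with $\dot\gamma^k=(\gamma')^iX_i^k$ bounds $|\dot\gamma^k|$ in terms of the frame components, closing the system on itself. Combining this with the preservation of the Lorentzian inner products $g(X_i,X_j)$ under parallel transport, and with the uniform boundedness and non-degeneracy of $g$ and $g^{-1}$ on $\overline U$, I expect a uniform bound $\max_{i,j}|X_i^j(t)|\le M_0$ on $[t_1,b)$, and hence $|\dot\gamma^j|\le nM_0$, so that $\phi\circ\gamma$ is Lipschitz on $[t_1,b)$.

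For (ii), the Lipschitz bound together with $\gamma\to p$ and regularity of $\gamma$ forces $b<\infty$, and the continuation $\lambda:[a,b]\to\man M$ defined by $\lambda|_{[a,b)}=\gamma$ and $\lambda(b)=p$ is locally Lipschitz. For (iii), Proposition~\ref{continuous parallel propagation} propagates the frame just past $b$, and the limiting components of $\gamma'$ in the frame define a tangent direction at $p$ along which a smooth geodesic $\mu:[b,b+\varepsilon)\to\man M$ continues $\lambda$, in direct analogy with the concatenation used in Case~2.1 of the proof of Theorem~\ref{thm.abst_original}. Lemma~\ref{Lem:extensionofLL} then supplies a generalized affine parameter on the concatenation $\lambda\cup\mu$ which agrees with the parameter of $\gamma$ on $[a,b)$ and extends past $b$, exhibiting the required extension in the class of generalized affinely parametrized locally Lipschitz curves.

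I expect the main obstacle to be the uniform frame bound in step (i): the direct Gronwall estimate from the frame ODE yields only $\dot M\lesssim M^2$ and so allows finite-parameter blow-up. The Lorentzian metric preservation by parallel transport must therefore be used non-trivially, together with the compactness of $\overline U$ and the GAP normalization, to preclude a Lorentz-boost growth of the coordinate components of $X_i$ along $\gamma$.
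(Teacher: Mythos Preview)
Your route differs from the paper's, and the gap is not only in step~(i) but more fundamentally in step~(ii). Even a uniform Lipschitz bound on $\phi\circ\gamma$ near $b$, together with $\gamma\to p$ and regularity, does \emph{not} force $b<\infty$: in flat Minkowski $\mathbb{R}^2$ the coordinate frame is parallel, so the generalized affine parameter coincides with Euclidean arc-length, and a unit-speed spiral $\gamma(t)=\tfrac{1}{1+t}(\cos\theta(t),\sin\theta(t))$, with $\theta$ chosen so that $|\gamma'|_{\mathrm{Eucl}}\equiv 1$, is a regular GAP-parametrized locally Lipschitz curve on $[0,\infty)$ with endpoint the origin. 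A coordinate-velocity upper bound gives no lower bound on the rate of approach to $p$, and ``regularity'' here means only $\gamma'\neq 0$ a.e., not a uniform lower bound. Your concern about step~(i) is also legitimate --- preservation of $g(X_i,X_j)$ only pins the frame matrix to a non-compact Lorentz-type orbit and so cannot by itself bound its coordinate entries --- but even granting (i), step~(ii) fails.

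The paper does not attempt to prove $b<\infty$ directly. It reparametrizes $\gamma$ onto $[0,1)$ via an explicit change of parameter $f$ (affine when $b<\infty$, $t\mapsto\tfrac{2}{\pi}\arctan(t-a)$ when $b=\infty$), concatenates $\gamma\circ f^{-1}$ with an \emph{arbitrary} geodesic from $p$ to obtain $\mu:[0,c+1)\to\man{M}$, and invokes Lemma~\ref{Lem:extensionofLL} with $c'=1<b'=c+1$. The lemma returns a generalized affine parameter $s:[0,c+1)\to[a,d)$ on $\mu$ with $s\circ f=\mathrm{id}$ on $[a,b)$, and its clause ``$c'<b'\iff b<d$'' gives $b<d$; since in the proof of that clause $b\le s(1)\in\mathbb{R}$, one obtains $b<\infty$ as a by-product. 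Thus extendibility and finiteness of $b$ come simultaneously from Lemma~\ref{Lem:extensionofLL}, with no velocity or frame estimate. Incidentally, in your step~(iii) the geodesic at $p$ need not match any limiting tangent of $\gamma$ (the extension is only required to be locally Lipschitz, not $C^1$), and in any case the frame-components of $\gamma'$ lie on $S^{n-1}$ and need not have a limit; the paper accordingly attaches an arbitrary geodesic from $p$.
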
  
  \begin{proof}
    If $b<\infty$, let $f:[a,b)\to[0,1)$ be defined by 
    $f(t)=\frac{1}{b-a}(t-a)$. 
    If
    $b=\infty$, let $f:[a,b)\to[0,1)$ be defined by
    $f(t)=\frac{2}{\pi}\arctan(t-a)$. In either case both
    $f$ and $f^{-1}$ are changes of
    parameter. 

    Let $p\in \man{M}$ be such that $\gamma\to p$.
    Let $\lambda:[0,c)\to \man{M}$, $c\in\mathbb{R}^+$, be a geodesic from $p$. Let
    $\mu:[0,c+1)\to \man{M}$ be defined by
    \[
      \mu(t)=\left\{\begin{aligned}
          \gamma\circ f^{-1}(t) && t\in[0,1)\\
              \lambda(t-1) && t\in[1,c+1).
        \end{aligned}\right.
    \]
    Since $1<c+1$,
    by Lemma \ref{Lem:extensionofLL} there exists 
    $d\in\mathbb{R}\cup\{\infty\}$,
    $b<d$,
    and a generalized affine parameter
    $s:[0,c+1)\to[a,d)$ on $\mu$ so that
    $s\circ f(t)=t$ for all $t\in[a,b)$.
    Let $\delta:[a,d)\to \man{M}$ be the generalized affinely parametrized
    curve defined by $\delta\circ s = \mu$.
    Then $\delta =\delta\circ s\circ f = \mu\circ f = \gamma$ on $[a,b)$.
    Since $b<d$, $b<\infty$ and the curve
    $\delta$
    is an extension of $\gamma$.
  \end{proof}

  It is possible to extend continuous causal curves with endpoints by
  continuous causal curves.

  \begin{Prop}\label{CurLim:Prop.WHasOneElementIFFGammaIsExtendibleAlt}
    Let $\gamma:[a,b)\to\man{M}$ be a generalized
    affinely parametrized continuous future directed 
    non-spacelike (timelike) curve.
    If $\gamma$ has an endpoint then $\gamma$ is extendible by
    a generalized affinely parametrized continuous future directed
    non-spacelike (timelike) curve and $b<\infty$.
  \end{Prop}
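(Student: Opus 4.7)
The plan is to modify the construction from the proof of Proposition \ref{prop:llextend} so that the extending geodesic carries the correct causal character. By Definition \ref{def.ccurve}, $\gamma$ is already equipped with a locally Lipschitz parametrization, and by hypothesis this parameter is a generalized affine one; hence $\gamma$ is a generalized affinely parametrized locally Lipschitz curve with endpoint $p$, and Proposition \ref{prop:llextend} already supplies $b<\infty$ together with the existence of a locally Lipschitz extension. All that remains is to produce an extension which is additionally continuous future directed non-spacelike (respectively timelike).

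In the construction in the proof of Proposition \ref{prop:llextend} the extending geodesic $\lambda$ was arbitrary. I replace it by a future directed non-spacelike (respectively timelike) geodesic $\lambda:[0,c)\to\man M$ starting at $p$, obtained by exponentiating a future directed causal (timelike) vector in $T_p\man M$, with $c$ chosen small enough that the image of $\lambda$ lies inside a strongly causal convex normal neighbourhood $V$ of $p$. The resulting concatenation $\mu$ and its generalized affine reparametrization $\delta$ (via Lemma \ref{Lem:extensionofLL}) are exactly as in Proposition \ref{prop:llextend}, so it suffices to check that $\mu$ satisfies Definition \ref{def.ccurve} in the appropriate sense; the condition is preserved under the strictly monotone reparametrization of Lemma \ref{Lem:extensionofLL}, so $\delta$ inherits the property from $\mu$.

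Away from the join point $t=1$ the condition is inherited directly from $\gamma$ or from $\lambda$, so the only verification is at $t=1$, where $\mu(1)=p$. The forward side is immediate: for $t>1$ close to $1$, $\mu(t)=\lambda(t-1)\in J^+(p,V)\setminus\{p\}$ (respectively $I^+(p,V)$) by the construction of $\lambda$. The main work is the backward side: I must show that for $s<b$ close to $b$, $\gamma(s)\in J^-(p,V)\setminus\{p\}$ in the causal case and $\gamma(s)\in I^-(p,V)$ in the timelike case. Choose $s_1$ close enough to $b$ that $\gamma([s_1,b))\subset V$. Definition \ref{def.ccurve} applied to $\gamma$ gives $\gamma(s_2)\in J^+(\gamma(s_1),V)\setminus\{\gamma(s_1)\}$ for all $s_2\in(s_1,b)$; letting $s_2\to b$ and using that $J^+(\gamma(s_1),V)$ is closed in $V$ yields $p\in J^+(\gamma(s_1),V)$, i.e.\ $\gamma(s_1)\in J^-(p,V)$. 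Strong causality of $V$ rules out $\gamma(s_1)=p$.

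The main obstacle lies in the timelike case, where one must upgrade from closed causal separation to open chronological separation as $s\to b$, since the limit argument above only yields $p\in \overline{I^+(\gamma(s_1),V)}$. I resolve this with the push-up lemma inside $V$: pick $s_1<s_2<b$ close to $b$, so that $\gamma(s_1)\in I^-(\gamma(s_2),V)$ by the timelike clause of Definition \ref{def.ccurve} applied to $\gamma$, and $\gamma(s_2)\in J^-(p,V)$ by the causal argument above; the push-up lemma then gives $\gamma(s_1)\in I^-(p,V)$, completing the verification and hence the proof.
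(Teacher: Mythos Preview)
Your proof is correct and shares the paper's overall skeleton: invoke Proposition~\ref{prop:llextend} for $b<\infty$ and the locally Lipschitz extension, choose the extending geodesic $\lambda$ to be future causal (timelike), and then verify Definition~\ref{def.ccurve} only at the join point. The difference lies in how you handle the backward side at the join.

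The paper argues by contradiction using Proposition~4.5.1 of \cite{HawkingEllis1973}: it supposes the unique geodesic from $\delta(t)$ to $p$ in the convex normal neighbourhood $U$ is spacelike (or, in the timelike case, null), then uses continuity of the exponential map to find a whole neighbourhood of $p$ consisting of points spacelike-separated from $\delta(t)$ in $U$, contradicting the causal relation between $\delta(t)$ and later points $\delta(t')$ along $\gamma$. You instead argue directly: closedness of $J^+(\gamma(s_1),V)$ in a convex normal neighbourhood gives $p\in J^+(\gamma(s_1),V)$ by passing to the limit, and the timelike upgrade comes from the push-up relation $I^-\circ J^-\subset I^-$ inside $V$. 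Your route is shorter and avoids the case analysis on the causal type of the connecting geodesic; the paper's route is more self-contained in that it cites a single external result repeatedly rather than appealing to both closedness of local $J^+$ and push-up. Both rely on the same implicit standard fact---that a continuous causal curve whose image lies in a convex normal neighbourhood $V$ has $\gamma(s_2)\in J^+(\gamma(s_1),V)$ for $s_1<s_2$---which neither argument spells out. Your appeal to strong causality of $V$ to exclude $\gamma(s_1)=p$ is slightly heavier than needed (the absence of closed causal curves in a convex normal neighbourhood suffices), but it is harmless.
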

  \begin{proof}
    From Proposition \ref{prop:llextend}
    it is clear that $b<\infty$ and there exists
    a generalized affinely parametrized locally Lipschitz curve 
    $\delta$ that is an extension of $\gamma$, where $\gamma$
    is considered as a locally Lipschitz curve.
    To prove the result it remains to
    show that $\delta$, as defined in the proof
    of Proposition \ref{prop:llextend}, 
    is a continuous future directed non-spacelike
    (timelike) curve. Note that, by construction
    $\delta:[a,d)\to\man{M}$, $b<d$ and
    $\delta = \gamma$ on $[a,b)$. 

    By construction $\delta$ is continuous. 
    Since $\delta|_{[a,b)}$ is a
    continuous, future directed, non-spacelike (timelike) curve and
    $\delta|_{[b,d)}$ can be chosen to be a smooth, future directed, 
    non-spacelike (timelike) geodesic,
    we know that for $\delta$ 
    to be a non-spacelike, future directed, continuous curve we need only 
    show that
    there exists a neighbourhood $N\subset [a,d)$ of 
    $b$ and a convex normal neighbourhood $U$ of $p=\delta(b)$  
    so that for all $t\in N$, $t\neq b$, if $t>b$ then 
    $\delta(t)\in J^+(p,U)-p$
    and
    if $t<b$ then $\delta(t)\in J^-(p,U)-p$. 
    In the timelike case we need $N$ and $U$ as before 
    so that for all $t\in N$, $t\neq b$, if $t>b$ then 
    $\delta(t)\in I^+(p,U)$
    and
    if $t<b$ then $\delta(t)\in I^-(p,U)$.

    Choose $U$ a convex normal neighbourhood of $p$ and let $N$
    be a neighbourhood of $b$ in $[a,d)$
    so that $\delta(N)\subset U$. Let $t\in N$.
    Suppose that $t>b$. 
    Then the condition immediately follows as $\delta|_{[b,d)}$ is a 
    non-spacelike (timelike), future directed geodesic.

    Suppose that $t<b$ and that $\gamma$ is non-spacelike. As $U$ is a
    convex normal neighbourhood
    there exists a unique geodesic $\alpha:[0,1]\to U$ so that
    $\alpha(0)=\delta(t)$ and $\alpha(1)=p$. If $\alpha$ is future directed
    non-spacelike then we are done, so suppose 
    that $\alpha$ is spacelike. 
    Then by the continuity of $g$ on $T_{\delta(t)}\man{M}$ and Proposition 4.5.1
    of \cite{HawkingEllis1973} there exists
    a neighbourhood $V$ of $p$ in $U$ 
    so that for all $q\in V$ the
    unique geodesic in $U$ from $\delta(t)$ to $q$ is spacelike. Since
    $\delta(\tau)\to p$ as $\tau\to b$, we can consider all $t'<b$,
    where $t<t'$, such that $\delta(t')\in V$. By construction,
    for each such $t'$, there is a unique spacelike 
    geodesic $\tilde{\alpha}_{t'}:[0,1]\to U$ so that
    $\tilde{\alpha}_{t'}(0) = \delta(t)$ and
    $\tilde{\alpha}_{t'}(1) = \delta(t')$.
    By Proposition 4.5.1 of
    \cite{HawkingEllis1973}, however, this implies that for
    each such $t'$, $\delta(t)\not\in J^-(\delta(t'), U)$. This is
    a contradiction and so $\alpha$ must be non-spacelike.
    Also each $\tilde{\alpha}_{t'}$ must be non-spacelike and
    future directed since $\delta(t)\in J^-(\delta(t'), U)$.
    By continuity $\alpha(t)=\lim_{t'\to b}\tilde{\alpha}_{t'}(t)$
    and so $\alpha$ is future directed.
    Lastly since $U$ is normal and as $t<t'$, $\delta(t)\neq\delta(t')$.
  
    Suppose that $t<b$ and that $\gamma$ is timelike. Let $\alpha$ be
    as above. If $\alpha$ is future directed, timelike then we are done, so
    suppose that
    $\alpha$ is non-timelike. If $\alpha$ is spacelike the same argument
    as above can be used to find a contradiction. Thus we
    can restrict here to the case that $\alpha$ is a
    null geodesic. 
    We may choose $t'\in (t,b)$, then by assumption 
    $\delta(t')\in I^+(\delta(t),U)$. This implies that the unique geodesic
    from $\delta(t')$ to $p$ is spacelike (otherwise we would have a
    timelike curve from $\delta(t)$ to $p$ in $U$
    which
    contradicts the assumption that $\alpha$ is null
    \cite[Proposition 4.5.1]{HawkingEllis1973}). The same argument as above 
    can then be applied to find a contradiction. Therefore $\alpha$ must be
    timelike and future directed. 

    Thus $\delta$ is a future directed continuous non-spacelike (timelike) curve
    as required.
  \end{proof}

\section{Proofs of the generalizations}

  \begin{proof}[Proof of Theorem
    {\ref{CurLim:Thm.NewAbstractBoundarySingularitTheorem}}]
    $\Leftarrow$
    Unchanged from Theorem \ref{thm.abst_original}.
    
    $\Rightarrow$ Let
	    $\gamma\in\mathcal{C}$ be incomplete.
	    We
	    have two cases:

    \begin{description}
      \item[\textbf{Case 1}]
        Unchanged from Theorem \ref{thm.abst_original}.
      	
      \item[\textbf{Case 2}] 
        By assumption $\gamma$ is precompact. 
        Then either $\gamma$ has an endpoint or $\gamma$ does not
        have an endpoint.
        \begin{description}
          \item[\textbf{Case 2.1}] 
            Suppose that $\gamma$ has an endpoint. By Proposition 
            \ref{CurLim:Prop.WHasOneElementIFFGammaIsExtendibleAlt}
          	we know that $\gamma$ is extendible.
            This is a contradiction
            and thus this case cannot occur.
          \item[\textbf{Case 2.2}]
            By assumption $\gamma$ is a precompact, winding curve.
            Since $\gamma$ is winding it has two full sequences 
            with different limit
            points. As $\man{M}$ is Hausdorff this implies that $\gamma$ is
            inextendible.
            Thus the inextendible curve
            $\gamma$ is totally imprisoned
            in the compact set 
            $\overline{\gamma}$. 
            Proposition \ref{prop.hawkellis} implies that
            the future (past) distinguishing condition fails on 
            $\overline{\gamma}$.
            This is a
            contradiction and therefore this case cannot occur.
        \end{description}
    \end{description}

    As only case 1 may occur, we have proven our
    result.
  \end{proof}

  \begin{proof}[Proof of Theorem
    {\ref{CurLim:Thm.NewAbstractBoundarySingularitTheoremLL}}]
    $\Leftarrow$
    Unchanged from Theorem \ref{thm.abst_original}.
    
    $\Rightarrow$ Let
	    $\gamma\in\mathcal{C}$ be incomplete.
	    We
	    have two cases:

    \begin{description}
      \item[\textbf{Case 1}]
        Unchanged from Theorem \ref{thm.abst_original}.
      	
      \item[\textbf{Case 2}] 
        By assumption $\gamma$ is precompact. 
        Then either $\gamma$ has an endpoint or $\gamma$ does not
        have an endpoint.
        \begin{description}
          \item[\textbf{Case 2.1}] 
            Suppose that $\gamma$ has an endpoint. By Proposition 
            \ref{prop:llextend}
          	we know that $\gamma$ is extendible.
            This is a contradiction
            and thus this case cannot occur.
          \item[\textbf{Case 2.2}]
            By assumption $\gamma$ is a precompact, winding curve.
            This implies that
            $\gamma$ is inextendible and totally imprisoned.
            By assumption such curves cannot exist.
            This is a
            contradiction and therefore this case cannot occur.
        \end{description}
    \end{description}

    As only case 1 may occur, we have proven our
    result.
  \end{proof}


\begin{thebibliography}{10}

\bibitem{Ashley2002a}
M.~J. S.~L. Ashley.
\newblock {\em Singularity Theorems and the Abstract Boundary Construction}.
\newblock PhD thesis, Department of Physics, Australian National University,
  2002.
\newblock Located at \textsf{http://hdl.handle.net/1885/46055}.

\bibitem{HawkingEllis1973}
S.~W. Hawking and G.~F.~R. Ellis.
\newblock {\em The Large Scale Structure of Space-Time}.
\newblock Cambridge University Press, 1973.

\bibitem{ScottSzekeres1994}
S.~M. Scott and P.~Szekeres.
\newblock The abstract boundary---a new approach to singularities of manifolds.
\newblock {\em Journal of Geometry and Physics}, 13(3):223--253, 1994.

\bibitem{Whale2010}
B.~E. Whale.
\newblock {\em Foundations of and Applications for the Abstract Boundary
  Construction for Space-Time}.
\newblock PhD thesis, Department of Quantum Science, Australian National
  University, 2010.
\newblock Located at \textsf{http://hdl.handle.net/1885/49393}.

\bibitem{Whale2014Chart}
B.~E. Whale.
\newblock The chart based approach to studying the global structure of a
  spacetime induces a coordinate invariant boundary.
\newblock {\em General Relativity and Gravitation}, 46(1):1--43, 2014.

\bibitem{Geroch1968b}
R.~Geroch.
\newblock What is a singularity in general relativity?
\newblock {\em Annals of Physics}, 48(3):526--540, 1968.

\bibitem{MaedaIshibashi1996}
K.~Maeda and A.~Ishibashi.
\newblock Causality violation and singularities.
\newblock {\em Classical and Quantum Gravity}, 13(9):2569--2576, 1996.

\bibitem{BeemEhrlichEasley1996}
J.~K. Beem, P.~E. Ehrlich, and K.~L. Easley.
\newblock {\em Global {L}orentzian {G}eometry}, volume 202 of {\em Pure and
  Applied Mathematics: A Series of Monographs and Textbooks}.
\newblock Marcel Dekker, Inc., 1996.

\bibitem{Senovilla1998}
J.~M.~M. Senovilla.
\newblock Singularity theorems and their consequences.
\newblock {\em General Relativity and Gravitation}, 30(5):701--848, 1998.

\bibitem{2014arXiv1410.5226S}
J.~M.~M. {Senovilla} and D.~{Garfinkle}.
\newblock {The 1965 Penrose singularity theorem}.
\newblock {\em ArXiv e-prints}, October 2014.
\newblock arXiv:1410.5226.

\bibitem{2006physics...5007S}
J.~M.~M. {Senovilla}.
\newblock {Singularity Theorems in General Relativity: Achievements and Open
  Questions}.
\newblock {\em ArXiv Physics e-prints}, April 2006.
\newblock arXiv:physics/0605007.

\bibitem{Flores2010Final}
J.~Flores, J.~Herrera, and M.~S{\'a}nchez.
\newblock On the final definition of the causal boundary and its relation with
  the conformal boundary.
\newblock {\em Adv. Theor. Math. Phys}, 15(4), 2011.
\newblock arXiv:1001.3270v2.

\bibitem{Bentabol2014}
J.~Margalef-Bentabol and E.~J.~S. Villase{\~{n}}or.
\newblock Topology of the {M}isner space and its g-boundary.
\newblock {\em General Relativity and Gravitation}, 46(7), 2014.

\bibitem{GarciaParrado:2005yz}
A.~Garcia-Parrado and J.~M.~M. Senovilla.
\newblock {Causal structures and causal boundaries}.
\newblock {\em Classical and Quantum Gravity}, 22:R1--R84, 2005.

\bibitem{citeulike:13467606}
J.~L. Flores, J.~Herrera, and M.~S\'{a}nchez.
\newblock Gromov, {C}auchy and causal boundaries for {R}iemannian, {F}inslerian
  and {L}orentzian manifolds.
\newblock {\em Memoirs of the American Mathematical Society}, 226(1064):1+,
  2013.

\bibitem{citeulike:9599226}
R.~A. Barry and S.~M. Scott.
\newblock The attached point topology of the abstract boundary for spacetime.
\newblock {\em Classical and Quantum Gravity}, 28(16):165003+, 2011.

\bibitem{citeulike:13467610}
R.~A. Barry and S.~M. Scott.
\newblock The strongly attached point topology of the abstract boundary for
  space-time.
\newblock {\em Classical and Quantum Gravity}, 31(12):125004+, 2014.

\bibitem{citeulike:8625460}
B.~E. Whale and S.~M. Scott.
\newblock A correspondence between distances and embeddings for manifolds: New
  techniques for applications of the abstract boundary.
\newblock {\em Journal of Geometry and Physics}, 61(5):927--939, 2011.

\bibitem{FamaScott1994}
C.~J. Fama and S.~M. Scott.
\newblock Invariance properties of boundary sets of open embeddings of
  manifolds and their application to the {A}bstract {B}oundary.
\newblock In {\em Differential geometry and mathematical physics (Vancouver,
  BC, 1993)}, volume 170 of {\em Contemporary Mathematics}, pages 79--111.
  American Mathematical Society, Providence, RI, 1994.

\bibitem{FamaClarke1998}
C.~J. Fama and C.~J.~S. Clarke.
\newblock A rigidity result on the ideal boundary structure of smooth
  spacetimes.
\newblock {\em Classical and Quantum Gravity}, 15(9):2829--2840, 1998.

\bibitem{Ashley2002b}
M.~J. S.~L. Ashley.
\newblock The stability of abstract boundary essential singularities.
\newblock {\em General Relativity and Gravitation}, 34(10):1625--1635, 2002.

\bibitem{AshleyScott2003}
M.~J. S.~L. Ashley and S.~M. Scott.
\newblock Curvature singularities and abstract boundary singularity theorems
  for space-time.
\newblock In K.~L. Duggal and R.~Sharma, editors, {\em Recent advances in
  Riemannian and Lorentzian geometries}, volume 337 of {\em Contemporary
  Mathematics}, pages 9--19. American Mathematical Society, 2003.

\bibitem{CoddingtonTheory}
E.~A. Coddington and N.~Levinson.
\newblock {\em Theory of Ordinary Differential Equations}.
\newblock Krieger Pub Co, 1984.

\bibitem{Evans2010}
L.~C. Evans.
\newblock {\em Partial Differential Equations: Second Edition (Graduate Studies
  in Mathematics)}.
\newblock American Mathematical Society, 2\textsuperscript{nd} edition, March
  2010.

\bibitem{Hale2009Ordinary}
J.~K. Hale.
\newblock {\em {Ordinary Differential Equations}}.
\newblock Dover Publications, 2009.

\bibitem{HaaserSullivan1991}
N.~B. Haaser and J.~A. Sullivan.
\newblock {\em Real Analysis}.
\newblock Dover Publications, 1991.

\bibitem{Penrose1972}
R.~Penrose.
\newblock {\em Techniques of Differential Topology in Relativity}.
\newblock Society for Industrial and Applied Mathematics, 1987.
\newblock Conference Board of the Mathematical Sciences Regional Conference
  Series in Applied Mathematics, No. 7.

\end{thebibliography}
\end{document}